%% file: main.tex
\documentclass[letterpaper, 10pt, conference]{ieeeconf}
\IEEEoverridecommandlockouts                  
\input{macros}

\title{Expressive Power of WSTL Formulas for Learning to Rank}

\author{Ruya Karagulle, Gustavo A. Cardona, Necmiye Ozay, Cristian-Ioan Vasile
\thanks{R. Karagulle and N. Ozay are with the University of Michigan. C.I. Vasile is with Lehigh University, G.A. Cardona is with Virginia Commonwealth University. Corresponding author email: ruyakrgl@umich.edu}\thanks{This work is supported in part by NSF Grant  TI-2303564 and IIS-2442644.
}}

\begin{document}
\maketitle

\begin{abstract}
Weighted Signal Temporal Logic (WSTL) is increasingly used as a scoring function in learning-to-rank problems of trajectories with safety guarantees, where its weighted quantitative semantics serve as a parametrized utility function. Despite the growing interest, prior work only assumes the expressiveness of WSTL formulas and empirically demonstrates its utility in capturing diverse preferences. This work focuses on the correctness of this assumption and asks whether using WSTL formulas as scoring functions is theoretically justified. We formalize two concepts: first, rank-realizability, which asks whether all rankings of a given signal set are achievable by varying weights, and, second, rank-capacity, the maximum signal set size for which the formula is rank-realizable. We propose a Mixed-Integer Linear Program to decide rank-realizability, and derive constructive lower bounds for rank-capacity. Experiments on a robotic navigation task show that while a practical WSTL specification may fail to be rank-realizable on a set with similar trajectories, its rank-capacity exceeds the size of the trajectory set. Analysis of Boolean-equivalent formulas reveals that formula structure affects expressivity and that rank-capacity can be increased without altering qualitative semantics.
\end{abstract}

\section{Introduction}
Temporal logics are powerful tools for integrating formal guarantees into machine learning applications, including reward shaping in Reinforcement Learning \cite{Li2017rltl, Balakrishnan2019rewardstl, jackermeier2025deepltl}, learning temporal logic formulas from labeled trajectories to enable integration with downstream control synthesis methods \cite{Karagulle2022stlclassification, BoVaPeBe-HSCC-2016}, and constructing neuro-symbolic architectures that ground neural networks in a logical framework \cite{riegel2020logical, badreddine2022logic}. In particular, Weighted Signal Temporal Logic (WSTL) has emerged as a tool for encoding preferences and priorities over system specifications by assigning weights to operators \cite{Mehdipour2021wstl,Yan2021neural}. This weighted structure naturally casts WSTL semantics as a parametrized utility function and motivates its use in learning from human feedback with safety guarantees \cite{Karagulle2024spl, Karagulle2024sapl, Karagulle2024cccc}. Prior work assumes that WSTL formulas, and their quantitative semantics, can serve as a score that orders trajectories by preference, and validates this empirically for tasks such as controller synthesis and user satisfaction. However, a fundamental question remains unanswered: Is WSTL theoretically expressive enough to serve this role? That is, for a given formula with adjustable weights, $(i)$ can it induce any desired ranking over a set of signals, or are there orderings it is structurally incapable of producing, regardless of how weights are chosen; $(ii)$ are there any fundamental limitations to the maximum number of signals for which the formula can induce all possible rankings?

This paper takes a step back from prior work and asks whether the assumption of using WSTL as a scoring function is theoretically justified, and under what conditions on the formula. We adopt a learning-theoretic perspective on WSTL
and formalize two concepts: $(i)$ rank-realizability: for a given finite set of signals, can a formula with adjustable weights realize all possible orderings by varying its weights, $(ii)$ rank-capacity: how large a signal set can be while still admitting rank-realizability. These questions are parallel to classical learnability results in the classification problem, such as VC-dimension, but are focused specifically on the use of WSTL formulas in learning-to-rank problems.

For rank-realizability, we analyze how weights propagate through the formula to the predicates at time instances, and whether each signal's robustness can be independently steered by adjusting distinct weight valuations. Building on this, we provide sufficient conditions for rank-realizability and propose a Mixed-Integer Linear Program to decide whether a formula is rank-realizable on a given finite set of signals. Moreover, we derive constructive lower bounds on rank-capacity by treating signals as decision variables as well. We demonstrate both concepts on a robot navigation task and a family of equivalent formulas in Boolean semantics, showing empirically that the formula structure, not merely the parameter count, affects the expressivity.

\section{Preliminaries}
We consider signals $\signal: \timedomain \to \sdomain$, where $\timedomain \subseteq \integers_{\geq0}$ is the discrete time domain and $\sdomain \subseteq \reals^{m}$ is the $m$-dimensional real-valued signal domain. STL is used to reason about signals, and a well-formed STL formula is defined by the grammar $\phi ::= \top \mid \predicate \mid \lnot \phi \mid \bigwedge_{\ell\in\range{1}{n}} \phi_\ell \mid \phi_1 \U_{[\lb,\ub]} \phi_2$, 
where $\top$ is the Boolean True, $\predicate$ is a predicate of the form $\predicate(\signal(t)):= \big(h_\predicate(\signal(t)) \geq 0 \big)$ where $h_\predicate: \sdomain \to \reals$ is a continuous function that maps the signal value at time $t$ to a real value. The operator $\lnot$ is the negation, $\land$ is the conjunction, with $n$ being the number of subformulas under the conjunction operator and $\range{1}{n}$ denoting $\{1, 2, \ldots, n\}$. Finally, $\U_{[\lb,\ub]}$ is the ``Until" operator
\footnote{Additional operators; disjunction $\lor$, Always $\G_{[\lb,\ub]}$, and Eventually $\F_{[\lb,\ub]}$ can be derived from the grammar as $\bigvee_{\ell \in \range{1}{n}} \phi_\ell= \lnot\bigwedge_{\ell\in\range{1}{n}} \lnot \phi_\ell$, $\F_{[\lb,\ub]}\phi = \top \U_{[\lb,\ub]} \phi$, and $\G_{[\lb,\ub]}\phi = \lnot(\F_{[\lb,\ub]}\lnot \phi)$.}. 
Subscript ${[\lb, \ub]}$ with $\lb, \ub \in \integers_{\geq0}$ and $\lb \leq \ub$, defines the time interval in $\timedomain$ that the temporal operator is acting on.
When the time interval includes the whole signal length, we omit the time interval subscript. We refer to \cite{Donze2010} for qualitative and quantitative semantics, also called \emph{robustness}. If a signal $\signal$ satisfies (violates) a formula $\phi$ at time $t$, it is shown as $(\signal, t)\models \phi$ ($(\signal, t) \not \models \phi$). The robustness of an STL formula $\phi$ over a signal $\signal$ at time $t$ is denoted as $\rob(\signal, \phi, t)$. When $t=0$, we denote it as $\rob(\signal, \phi)$. 
For finite signals, we assume that the signal length spans the time horizon of the formula, which is a function of the temporal operators' time intervals.
An STL formula is in positive normal form (PNF) when negation only appears in front of predicates. Any STL formula can be written in PNF, and, throughout this paper, we consider STL formulas in PNF.

WSTL extends STL by assigning strictly positive weights to operators, reflecting the relative importance of subformulas and/or time instances \cite{Mehdipour2021wstl}. Its syntax is defined as
$\phi ::= \top \mid \predicate \mid \lnot \phi \mid \phi_{1} \land^{\weight} \phi_{2} \mid \phi_{1} \U_{[\lb,\ub]}^{\weight^1,\weight^2} \phi_{2}$,
where $\weight \in \reals_{>0}^2$ and $\weight^1, \weight^2 : \range{\lb}{\ub} \to \reals_{>0}$. 
All operators are interpreted as in STL. 
The quantitative semantics of WSTL is called \textit{the \wstlrob}~\cite{Mehdipour2021wstl,Karagulle2024spl} and recursively defined as:
\begin{equation}\label{eq:quantsemantics}\arraycolsep=1pt
\begin{array}{rcl}
\wrob(\signal, \top, t) &=& \infty, \\
\wrob(\signal, \predicate, t) &=& \rob(\signal,\predicate,t), \\
\wrob(\signal, \lnot \phi, t) &=& -\wrob(\signal,\phi, t), \\
\wrob(\signal, \land^{\weight}_{\ell \in \range{1}{n}} \phi_\ell, t) &=& \min\limits_{\ell \in \range{1}{n}} \big (\weight_\ell \ \wrob(\signal, \phi_{\ell},t) \big ), \\
\wrob(\signal, \phi_1 \U_{[\lb,\ub]}^{\weight^1,\weight^2} \phi_2, t)  &=&  \max\limits_{t' \in \range{\lb}{\ub}} \hspace{-0.1cm}\big( \min ( \weight^1_{t'}\ \hspace{-0.25cm} \min\limits_{t'' \in \range{t}{t+t'}} \hspace{-0.25cm} \wrob(\signal,\phi_1,t''), \\ && \weight^2_{t'} \ \wrob(\signal, \phi_2, t+t') ) \big ).
\end{array}
\end{equation}
The derived operators have the following definitions:
\begin{equation}\label{eq:quantsemanticsdr}\arraycolsep=1pt
\begin{array}{rcl}
\wrob(\signal, \lor^{\weight}_{\ell \in \range{1}{n}} \phi_\ell, t) &=& \max\limits_{\ell \in \range{1}{n}} \big (\weight_\ell \ \wrob(\signal, \phi_\ell,t)\big ), \\
\wrob(\signal, \G^{\weight}_{[\lb,\ub]}\phi, t)  &=&  \min\limits_{t' \in \range{\lb}{\ub}} \big(w_{t'} \ \wrob  (\signal,\phi, t + t')\big), \\
\wrob(\signal, \F^{\weight}_{[\lb,\ub]}\phi, t)  &=&  \max\limits_{t' \in \range{\lb}{\ub}} \big(w_{t'} \ \wrob  (\signal,\phi, t + t')\big).
\end{array}
\end{equation}

We denote \wstlrob at time $t=0$ using $\wrob(\signal, \phi)$. WSTL quantitative semantics recovers STL's quantitative semantics when all weights are equal to one. It is shown that the quantitative semantics in \eqref{eq:quantsemantics} and \eqref{eq:quantsemanticsdr} are sound with respect to the quantitative semantics of STL \cite{Karagulle2024spl, Mehdipour2021wstl}. 

To enable learning, we treat weights as tunable parameters and define an extension to WSTL called Parametric Weighted Signal Temporal Logic (PWSTL) \cite{Karagulle2022stlclassification}, denoting the set of unknown parameters as $\weightset$, and PWSTL formulas as $\phi_{\weightset}$. The weight set is the set (or a subset) of all possible weights in the formula. We consider $\weightset$ as the set of all possible weights, unless otherwise defined. A valuation $\wval$ instantiates a WSTL formula $\phi_{\wval}$. 

\section{Expressivity of WSTL for Learning to Rank} \label{sec:expressivity-ltr}
In machine learning, learning to rank is one of the core problems of preference learning~\cite{Cao2007ltr}. Given a set of items $X = \{x_1, \ldots, x_d\}$, the learning-to-rank problem searches for a utility function $f: X \to \reals$, such that for a given ranking of items, e.g., $x_1 \succ \ldots \succ x_d$, the induced utility values preserve the same order, i.e., $f(x_1) > \ldots > f(x_d)$. We use ``ranking'' for a strict comparison of items (no ties), and ``ordering'' for comparisons that allow ties.

In this paper, we focus on the learning-to-rank problem of signals. For this task, we consider the \wstlrob of a WSTL formula template as a parametric utility function over signals. Formally, given a PWSTL formula $\phi_{\weightset}$ and $d$ signals $\set{X} = \{\signal_1,\ldots,\signal_d\}$, 
define the utility function $\boldsymbol{\rho}_\phi(\wval) = \mat{\wrob(\signal_1,\phi_{\wval}) \, \cdots \, \wrob(\signal_d,\phi_{\wval})}^T$
that maps each valuation $\wval$ of $\weightset$ to a vector of robustness values. Each valuation defines an ordering over $d$ signals by sorting the entries of $\boldsymbol{\rho}_\phi(\wval)$, achieving a set of orderings by adjusting weights. We focus on rankings throughout the paper.

Formally, a ranking $\perm: \range{1}{d} \to \range{1}{d}$ is a rearrangement of $d$ signals representing $\signal_{\perm(1)}\succ \ldots \succ \signal_{\perm(d)}$.
Let $\permspace^d = \{\perm: \range{1}{d} \to \range{1}{d} \mid \perm \text{ is a bijection} \}$ denote the set of all rankings of $d$ signals. The \emph{expressivity} of $\phi$ over a finite set $\set{X}$ in a learning-to-rank setting is captured by how large the set of rankings that are achievable by $\boldsymbol{\rho}_\phi(\wval)$ is:

\begin{definition}[Expressivity]
\label{defn:ranking-set}
Let $\set{X}$ be a set of $d$ signals and $\phi_\weightset$ a PWSTL formula with parameter set $\weightset=\mathbb{R}^p_{>0}$, where $p$ is the number of parameters. The expressivity of $\phi$ over $\set{X}$ is its ranking set with respect to $\set{X}$, i.e.,
\begin{equation}\label{eq:ranking-set}\begin{split}
    \langle \phi \rangle_{\set{X}} =  \{\perm \in \permspace^d \mid & \exists\,\wval \in \mathbb{R}^p_{>0}, \\ & \wrob(\signal_{\perm(1)}, \phi_{\wval}) > \cdots >  \wrob(\signal_{\perm(d)}, \phi_{\wval}) \}.\end{split}
\end{equation}
We say $\phi$ is \emph{rank-realizable} over $\set{X}$ if $|\langle \phi \rangle_{\set{X}}| = |\set{X}|!$.
\end{definition}
Let us illustrate these concepts with an example. 
\begin{example}\label{example:1}
    Let $\phi = (x \ge 0) \land (y\ge 0)$ be a formula. Recall the \wstlrob computation $\wrob(\signal, \phi_{\wval}) = \min(w_1 x(0), w_2y(0))$. Let $\signal_1 = [x_1(0) = 8, y_1(0) = 1/8]$, $\signal_2 = [x_2(0)=2, y_2(0) =1/2]$, $\signal_3 = [x_3(0) = 1/2, y_3(0) = 2]$, and $\signal_4 = [x_4(0)= 1/8, y_4(0)= 8]$ be 4 signals. The ranking set of $\phi$ with respect to $\set{X} = \{\signal_1, \signal_2, \signal_3, \signal_4\}$ is: 
    \begin{align*}
    \langle \phi \rangle_{\mathcal{X}} = \{ &(1,2,3,4), (2,1,3,4), (2,3,1,4), (3,2,1,4), \\ &(2,3,4,1), (3,4,2,1), (4,3,2,1) \},
    \end{align*} yielding $7$ distinct rankings out of $24$ possible ways. Now, let $\tilde{\phi}=\phi\lor\phi$ be another WSTL formula with $\wrob(\signal, \tilde{\phi}_{\wval}) = \max\big(w_1\min(w_3 x(0), w_4y(0)),w_2\min(w_5x(0), w_6y(0))\big)$. Formula $\tilde{\phi}$ can achieve all rankings for the same signal set, meaning $\tilde{\phi}$ is rank-realizable with respect to $\set{X}$. For example, set $\wval_1=\wval_2=1$, $\wval_3= 8\delta_1$, $\wval_4= 1/2\delta_2$, $\wval_5 = 1/2\delta_3$, and $\wval_6 = 8\delta_4$, where $\delta_i \in (1/8,1)$. Any permutation of the $\delta_i$ values corresponds to a permutation of the induced ranking over the signals.
\end{example}

Example~\ref{example:1} illustrates how two formulas can differ in expressivity on the same set. While $\tilde{\phi} = \phi \lor \phi$ is rank-realizable over $\set{X}$, naively adding more weights via $\hat{\phi} =  \phi \land \phi$ does not help, as the outer conjunction cannot increase expressivity beyond $\phi$ itself. That is why no four signals exist for which $\hat{\phi}$ is rank-realizable. This shows that expressivity is not monotone in the number of weights, and two formulas with the same meaning in the Boolean semantics may have different ranking sets and expressive powers.

Computing $\langle \phi \rangle_\mathcal{X}$ is a challenging problem, as naively counting the realizable rankings is intractable due to the factorial complexity. Instead, we study a simpler yet fundamental question: is $\phi$ rank-realizable over $\set{X}$? Answering this question reveals whether a formula can fully distinguish preferences among the given signals, which is a nontrivial problem since rank-realizability has no simple relation to the number of parameters or to the formula's structure.

Additionally, we are interested in the fundamental limitations on a formula's expressivity over the entire signal domain, that is, the maximum number of signals for which the formula is rank-realizable. We call this property \emph{rank-capacity} of a formula. Rank-capacity is conceptually related to \emph{VC-dimension} in machine learning theory. That said, results from VC-dimension analysis cannot be extended to rank-capacity, as there is no simple way to recast it in terms of shattering of items. We define rank-capacity as follows:
\begin{definition}[Rank-capacity]\label{def:rank-capacity}
    Let $\phi_\weightset$ be a PWSTL formula. The rank-capacity of $\phi$ is the cardinality of the largest signal set $\set{X}$, such that $\phi$ is rank-realizable over $\set{X}$, that is,
    \begin{equation}
        \textrm{RC}(\phi) = \max |\set{X}| \ \suchthat \ |\langle \phi\rangle_{\set{X}}| = |\set{X}|!.
    \end{equation}
\end{definition}

In the next two sections, we provide a sufficient criterion and a mixed-integer linear programming (MILP) approach to check rank-realizability for a given PWSTL formula and set of signals, and extend our findings to rank-capacity to determine lower bounds for $\textrm{RC}(\phi)$. The next two sections develop the theoretical foundations to answer both problems. Fig.~\ref{fig:summary-chart} provides an overview of the main results and logical dependencies to help readability.
\input{flow_chart_iterations}

\section{Rank-Realizability of WSTL formulas}

In this section, we are given a set $\set{X}= \{\signal_1, \ldots, \signal_d\}$ of $d$ signals. Formally, we address the following problem:

\begin{problem}[Rank-Realizability Problem]
\label{prob:ranking-set}
Let $\set{X}$ be a set of $d$ signals and $\phi_\weightset$ a PWSTL formula with parameter set $\weightset=\mathbb{R}^p_{>0}$, where $p$ is the number of parameters. Decide if $\phi$ is rank-realizable with respect to $\set{X}$, that is, $|\langle\phi\rangle_{\set{X}}| = d!$.
\end{problem}

A necessary condition for rank-realizability is that all signals in $\set{X}$ share the robustness sign. Otherwise, as WSTL semantics is sound, it is not possible to change a ranking between violating and satisfying signals. Therefore, for the rest of the analysis, we assume $\rob(\signal_i, \phi) > 0$ for all $\signal_i \in \set{X}$. Extension to mixed-sign signal sets is discussed in Sec.~\ref{sec:mixed-sign-signals}.

First, we introduce key concepts to provide a sufficient condition for rank-realizability, namely, Th.~\ref{thm:indep-sufficient} and Pb.~\ref{prob:rank_realizability}. Then, we formulate rank-realizability as a Mixed-Integer Program (MIP) in Sec.~\ref{sec:rank-realizability-base-weights}. We will show in Sec.~\ref{sec:rank-realizability-original-weights} that it is possible to recast it as a linear problem either by treating derived variables as decision variables or introducing a $\log$-transformation defined in~\cite{karagulle2025sopl}.

\paragraph{Ordering Space}
Define the \emph{ordering space of $\perm$}, denoted as $\robspace_\perm \subseteq \reals^d$, as the closed set of all robustness values inducing $\signal_{\perm(1)} \succeq \signal_{\perm(2)} \succeq \cdots \succeq \signal_{\perm(d)}$. Ordering spaces capture possible orderings of signals independent of any particular WSTL formula. Recall the utility function $\boldsymbol{\rho}_{\phi}$  maps weight valuations to an ordering space. Finally, define the center $\mathscr{C} = \{\alpha\mathbf{1}_d \mid \alpha \in \reals\}$ as the set of $d$-dimensional vectors whose entries are all equal to $\alpha$ in $\reals^d$. 



\paragraph{Base weights}
In a WSTL formula $\phi_{\wval}$, the weight valuation $\wval \in \mathbb{R}^p_{>0}$ assigns coefficients to operators. By the positive homogeneity of $\min$ and $\max$, each operator weight can be carried inside its operands, recursively, propagating to the predicates. As a result, each predicate evaluated at a particular time instance, called \emph{predicate-time pair} $(\predicate, t)$, is multiplied by the product of all operator weights along the path from the outermost operator to that predicate. We refer to this product as the \emph{base weight} of  $(\predicate, t)$, which acts on $h_\predicate(\signal(t))$. We note that each appearance of a subformula-time pair is considered distinct, yielding a tree structure representing formula syntax \cite{karagulle2025sopl}. When there is a repetition of subformulas, e.g., $\phi = \phi_1 \land \phi_1$, we rename them with identifiers to make the distinction traceable.

Let $\phi_1 \subf \phi_2$ denote that $\phi_1$ is a subformula of $\phi_2$. If $\phi_1$ is an operand of $\phi_2$, then $\phi_1$ is called a child of $\phi_2$. We write $(\phi_1, t_1) \subf (\phi_2, t_2)$ to denote that the evaluation of $\phi_1$ at $t_1$ is required for evaluating $\phi_2$ at time $t_2$, and define the path from $(\phi, 0)$ to $(\varphi, t)$ as $\fpath_\phi(\varphi, t) = \{(\varphi', t') \mid (\varphi, t) \subf (\varphi', t') \subf (\phi, 0) \}$ (i.e., a path in the aforementioned tree).

\begin{definition}[Base Weights]
\label{def:base-weights}
Let $\phi_{\wval}$ be a WSTL formula with $\wval \in \mathbb{R}^p_{>0}$. For each predicate-time pair $(\predicate, t)$, the \emph{base weight} $\widetilde{w}_{\predicate,t}(\wval) = \prod_{(\varphi, t') \in \fpath_\phi(\predicate,t)} \wval_{\varphi,t'}$ is the multiplication of all weights on its path.
The collection $\widetilde{\wval} = \{\widetilde{w}_{\predicate,t}(\wval)\}$ of all $\tilde{p}$ base weights defines the formula $\phi_{\widetilde{\wval}}$.
\end{definition}

Base weights represent the case where operator-level weights are collapsed into a single multiplication at each predicate-time pair, capturing how weights propagate through the formula to influence each predicate's contribution to the overall robustness. This aggregation allows extending the positive homogeneity of $\min$/$\max$ to \wstlrob.

\begin{proposition}[Base weight homogeneity] \label{prop:base-homogeneity}
Let $\phi_{\widetilde{\wval}}$ be a WSTL formula with base weights $\widetilde{\wval}$.  
Then, for any $\alpha > 0$ and any signal $\signal$, $\wrob(\signal, \phi_{\alpha \widetilde{\wval}})  = \alpha \cdot \wrob(\signal, \phi_{\widetilde{\wval}}).$
\end{proposition}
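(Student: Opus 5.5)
The plan is a structural induction on the syntax tree of $\phi$, after first making precise what $\phi_{\widetilde{\wval}}$ means. By Definition~\ref{def:base-weights}, $\phi_{\widetilde{\wval}}$ is the formula in which every operator weight equals one and each predicate-time pair $(\predicate,t)$ carries the coefficient $\widetilde{w}_{\predicate,t}$ multiplying $h_\predicate(\signal(t))$. Equivalently, it is the tree of $\min$/$\max$ operations over the leaves $\widetilde{w}_{\predicate,t}\, h_\predicate(\signal(t))$.

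I will first justify that this tree agrees with the original weighted robustness. The tool is positive homogeneity of $\min$ and $\max$: for $c>0$, $c\min(a,b)=\min(ca,cb)$ and $c\max(a,b)=\max(ca,cb)$. This lets each operator weight be pushed down to its operands, recursively, until it reaches the predicates, exactly as described before Definition~\ref{def:base-weights}. For negation, pushing a positive weight through gives $c\cdot(-r) = -(cr)$, so it is compatible. In PNF, negation only appears directly on predicates, so this case is immediate.

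The main claim is then proved by induction on the tree, with the hypothesis that, for every node $(\varphi,t')$, scaling all base weights in its subtree by $\alpha$ scales its robustness by $\alpha$.
\begin{itemize}
\item Base case, predicate $(\predicate,t)$: the value is $\alpha\widetilde{w}_{\predicate,t}h_\predicate(\signal(t)) = \alpha\cdot\big(\widetilde{w}_{\predicate,t}h_\predicate(\signal(t))\big)$.
\item Negated predicate: the value is $-\alpha\widetilde{w}\,h = \alpha\cdot(-\widetilde{w}\,h)$.
\item $\top$: the value is $\infty$, and $\alpha\cdot\infty=\infty$ for $\alpha>0$.
\item Conjunction, disjunction, $\G$, $\F$: these are $\min$ or $\max$ over children. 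Each child scales by $\alpha$ by the induction hypothesis, and $\min_\ell \alpha r_\ell = \alpha\min_\ell r_\ell$ because $\alpha>0$ preserves order. The same holds for $\max$.
\item Until: this is a nested $\max$ over $t'$ of $\min$ of an inner $\min$ over $t''$ and a child value, all with unit weights in the base-weight form. The same monotonicity argument applies at each level.
\end{itemize}
Evaluating at the root $(\phi,0)$ gives the statement.

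The main obstacle is the bookkeeping in the Until and temporal cases. There, one subformula is evaluated at many times and may be reused inside the nested $\min$ over $t''$. This is handled by the convention fixed before Definition~\ref{def:base-weights}: each subformula-time appearance is a distinct tree node with its own path, hence its own base weight. Under that convention, scaling all base weights by $\alpha$ scales every leaf of the tree uniformly, and nothing else changes.
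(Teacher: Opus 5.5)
Your proposal is correct and takes the same approach as the paper, whose proof is the one-line observation that the result follows from positive homogeneity of $\min$/$\max$. Your structural induction spells that observation out node by node, and you correctly handle repeated subformula-time pairs under Until using the paper's tree convention.
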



Proof directly follows from the positive homogeneity of $\min$/$\max$. Prop.~\ref{prop:base-homogeneity} bounds the search space in weight synthesis problems. As long as the product of scalings equals $\alpha$ in all paths from predicate-time pairs to the outermost operator, we can positively scale the original weights arbitrarily. Prop.~\ref{prop:base-homogeneity} provides structure over ordering spaces as well.

\begin{corollary}\label{cor:center-cover}
Given $d$ signals, and a WSTL formula $\phi_{\widetilde{\wval}}$ with base weights $\widetilde{\wval} \in \reals_{>0}^{\tilde{p}}$ that achieves ranking $\perm \in \permspace^d$, then any positive scaling of $\widetilde{\wval}$ achieves the same ranking $\perm$. Formally, $\forall \phi$,  $\forall \perm \in \permspace^d$, $\boldsymbol{\rho}_\phi(\widetilde{\wval}) \in \robspace_\perm$ implies $\forall \alpha>0$ we have $\boldsymbol{\rho}_\phi(\alpha \widetilde{\wval})\in \robspace_\perm$.
\end{corollary}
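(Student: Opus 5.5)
The corollary follows directly from Prop.~\ref{prop:base-homogeneity}, combined with the observation that every ordering space $\robspace_\perm$ is a cone invariant under positive scaling. I would fix $\phi$, the signals $\signal_1,\ldots,\signal_d$, a ranking $\perm \in \permspace^d$, and base weights $\widetilde{\wval}$ with $\boldsymbol{\rho}_\phi(\widetilde{\wval}) \in \robspace_\perm$. Then I would take an arbitrary $\alpha>0$.

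The first step is to apply Prop.~\ref{prop:base-homogeneity} to each signal separately. This gives $\wrob(\signal_i, \phi_{\alpha\widetilde{\wval}}) = \alpha\,\wrob(\signal_i,\phi_{\widetilde{\wval}})$ for every $i \in \range{1}{d}$. Stacking these coordinates yields the vector identity $\boldsymbol{\rho}_\phi(\alpha\widetilde{\wval}) = \alpha\,\boldsymbol{\rho}_\phi(\widetilde{\wval})$. The identity is legitimate because the same base-weight vector $\widetilde{\wval}$ defines the utility for all $d$ signals, so a single scalar $\alpha$ multiplies every entry.

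The second step uses the definition of $\robspace_\perm$ as the set of vectors $r\in\reals^d$ with $r_{\perm(1)} \geq r_{\perm(2)} \geq \cdots \geq r_{\perm(d)}$. Multiplying each inequality by $\alpha>0$ preserves it, so $\robspace_\perm$ is closed under positive scaling and $\alpha\,\boldsymbol{\rho}_\phi(\widetilde{\wval}) \in \robspace_\perm$. The same argument applies to strict inequalities. Hence if $\widetilde{\wval}$ realizes $\perm$ strictly as a ranking in the sense of Definition~\ref{defn:ranking-set}, then so does $\alpha\widetilde{\wval}$, which is the ``achieves the same ranking'' part of the statement.

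The only point that needs care, rather than a real obstacle, is the $\top$ case, where $\wrob = \infty$. There I would use the convention $\alpha\cdot\infty=\infty$, which Prop.~\ref{prop:base-homogeneity} already implicitly uses, so the argument is unaffected. I would also remark that realizing a scaling of base weights by original weights (for example, multiplying only the outermost operator weight by $\alpha$) shows that the corollary transfers back to $\boldsymbol{\rho}_\phi(\wval)$. Because the outermost operator lies on every path, this multiplies every base weight by $\alpha$.
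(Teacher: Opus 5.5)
Your argument is correct and is the paper's one-line proof spelled out: Prop.~\ref{prop:base-homogeneity} gives $\boldsymbol{\rho}_\phi(\alpha\widetilde{\wval}) = \alpha\,\boldsymbol{\rho}_\phi(\widetilde{\wval})$, and the paper's ``preserving the ordering'' is exactly your observation that $\robspace_\perm$, in both its non-strict and strict-inequality forms, is invariant under positive scaling. One small fix to your side remark on original weights: the outermost operator carries one weight per operand or per time instant, so you must scale all of those entries by $\alpha$ rather than a single scalar, since each path passes through exactly one of them.
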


\begin{proof}
By Prop.~\ref{prop:base-homogeneity}, scaling $\widetilde{\wval}$ by $\alpha>0$ scales all robustness values uniformly, preserving the ordering.
\end{proof}

\paragraph{Critical path}
The \emph{critical path} of a signal is the path from a predicate-time pair to the outermost operator along which that pair determines the optimizer at each operation, making the \wstlrob value at that predicate-time pair equal to the signal's overall \wstlrob value.

\begin{definition}[Critical Path]\label{def:critical-path}
    Given a signal $\signal$, formula $\phi$ and predicate-time pair $(\predicate, t) \subf (\phi, 0)$, the path $\fpath_\phi(\predicate, t)$ (and the predicate-time pair $(\predicate, t)$) is \emph{critical} if $(\predicate, t)$ determines the weighted robustness of the formula and each subformula along its path, that is, $\wrob(\signal,\varphi,t') = \widetilde{w}_{\predicate, t} \cdot h_\predicate(\signal(t))$ for all $(\varphi,t')\in\fpath_\phi(\predicate,t)$.
    If exactly one such pair exists, we call the predicate-time pair and its path \emph{unique critical}.
\end{definition}

The critical predicate-time pair can be defined for any subformula-time pair $(\varphi, t)$ of $\phi$. Let $(\predicate^\dagger, t^\dagger)$ be the critical predicate-time pair of $(\varphi,t)$. We define \emph{the critical value of $(\varphi,t)$} as $ \wrob_c(\signal,\varphi,t) = \widetilde{\wval}_{\predicate^\dagger, t^\dagger}h_{\predicate^\dagger}(s(t^\dagger))$.

We now establish a sufficient condition for rank-realizability from the existence of a center point $\widetilde{w}$ such that $\boldsymbol{\rho}_\phi(\widetilde{\wval}) \in \mathscr{C}$, combined with a local independence property on base weights. By Cor.~\ref{cor:center-cover}, we can reduce the existence condition to checking $\boldsymbol{\rho}_\phi(\widetilde{\wval}) = \mathbf{1}_d$. Let $K(\phi)= \{\widetilde{\wval} \mid \boldsymbol{\rho}_{\phi}(\widetilde{\wval}) = \mathbf{1}_d\}$ denote the set of base weights achieving this, and define the local independence condition as follows.



\begin{definition}[Independent Sensitivity Directions] \label{def:indep-sens-dir}
Let $\set{X}$ be a set of $d$ signals, $\phi_{\weightset}$ a PWSTL formula, and 
$K(\phi)$ non-empty.
The sensitivity directions are locally independent around 
$\boldsymbol{1}_d$
if there exists $\varepsilon \in (0,1)$ such that for all $\boldsymbol{r} \in \reals^d$ with $\|\boldsymbol{r}- \boldsymbol{1}_d\|_\infty \leq \varepsilon$, there exists $\widetilde{\wval}$ such that $\boldsymbol{\rho}_{\phi}\left(\widetilde{\wval}\right) = \boldsymbol{r}$.
\end{definition}


Intuitively, small independent perturbations in the \wstlrob values of signals can be achieved through independent rescalings of base weights $\widetilde{\wval} \in K(\phi)$.
If we start from an all-equal \wstlrob point and perturb the \wstlrob values of signals independently, we can rank them arbitrarily. 

\begin{theorem}
\label{thm:indep-sufficient}
Given $\set{X}$ and a PWSTL formula $\phi_{\weightset}$, 
if the sensitivity directions are locally independent around $\boldsymbol{1}_d$, 
then $\boldsymbol{\rho}_{\phi}$ achieves all rankings.
\end{theorem}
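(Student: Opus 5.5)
The plan is a direct construction. For each target ranking $\perm \in \permspace^d$ we will exhibit a robustness vector $\boldsymbol{r}$ that lies strictly inside the ordering space $\robspace_\perm$ and within the $\varepsilon$-neighborhood of $\boldsymbol{1}_d$ guaranteed by Def.~\ref{def:indep-sens-dir}. The local independence hypothesis then supplies base weights realizing $\boldsymbol{r}$.

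\textbf{Step 1: pick the target vector.} Fix the $\varepsilon \in (0,1)$ from Def.~\ref{def:indep-sens-dir} and an arbitrary $\perm$. Define $\boldsymbol{r} \in \reals^d$ by
\begin{equation*}
r_{\perm(k)} = 1 + \varepsilon\,\frac{d+1-2k}{d}, \qquad k \in \range{1}{d}.
\end{equation*}
The entries are strictly decreasing in $k$, so $r_{\perm(1)} > \cdots > r_{\perm(d)}$. Moreover $|r_{\perm(k)}-1| \le \varepsilon (d-1)/d \le \varepsilon$, hence $\|\boldsymbol{r}-\boldsymbol{1}_d\|_\infty \le \varepsilon$. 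Since $\varepsilon<1$, every entry is also positive, which is consistent with the standing assumption $\rob(\signal_i,\phi)>0$.

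\textbf{Step 2: realize it.} By Def.~\ref{def:indep-sens-dir} there exist base weights $\widetilde{\wval}$ with $\boldsymbol{\rho}_\phi(\widetilde{\wval}) = \boldsymbol{r}$. The resulting strict ordering is exactly $\perm$. Because $\perm$ was arbitrary, all $d!$ rankings are achieved.

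\textbf{Step 3: from base weights back to operator weights.} This is the main subtlety, because Def.~\ref{defn:ranking-set} quantifies over operator valuations $\wval \in \reals^p_{>0}$, while the hypothesis yields base weights. If the base weights $\widetilde{\wval}$ are taken to be free parameters, then the theorem's conclusion, ``$\boldsymbol{\rho}_\phi$ achieves all rankings,'' is already obtained at Step 2.

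To transfer the result to the original weights, use the fact that each predicate-time pair appears once in the syntax tree, since repeated subformulas are renamed. The map $\wval \mapsto \widetilde{\wval}$ is therefore surjective onto $\reals^{\tilde p}_{>0}$. One way to see this is to set every non-leaf operator weight to $1$ and place $\widetilde{w}_{\predicate,t}$ on the innermost weight along $\fpath_\phi(\predicate,t)$. Equivalently, one can appeal to Prop.~\ref{prop:base-homogeneity} and Cor.~\ref{cor:center-cover} to rescale along paths without changing the ranking.

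I expect this identification to be the point requiring care. If some predicate-time pair carries no dedicated innermost weight, surjectivity has to be argued through the tree structure, or else the theorem has to be read with respect to base weights.
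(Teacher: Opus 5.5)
Your Steps 1--2 are the paper's proof. The paper takes an arbitrary $\boldsymbol{r}_\perm$ in the interior of $\robspace_\perm$ with $\|\boldsymbol{r}_\perm-\boldsymbol{1}_d\|_\infty<\varepsilon$ and invokes Def.~\ref{def:indep-sens-dir}. Your explicit vector $r_{\perm(k)}=1+\varepsilon(d+1-2k)/d$ is a valid such point. That already proves the theorem in the sense the paper proves it, with $\boldsymbol{\rho}_\phi$ regarded as a function of base weights.

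Step 3, however, contains a false claim and should be dropped. The map $\wval\mapsto\widetilde{\wval}$ is not surjective onto $\reals^{\tilde p}_{>0}$ in general. Renaming makes the leaves of the unrolled tree distinct, but operator weights are still shared among leaves. A non-temporal operator's weights are reused at every evaluation time. A nested temporal operator's weights are indexed by offset, so they are reused across the outer operator's time instances.

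The paper's own example $\G_{[0,1]}((x\ge0)\land(y\ge0))$ shows this. The innermost weights $w^{\land}_{x\ge0}, w^{\land}_{y\ge0}$ serve both $t=0$ and $t=1$, which forces $\widetilde w_{x\ge0,0}\,\widetilde w_{y\ge0,1}=\widetilde w_{y\ge0,0}\,\widetilde w_{x\ge0,1}$. Hence $(10,10,10,12.5)$ is not realizable, and your ``innermost weight'' assignment is impossible. Prop.~\ref{prop:base-homogeneity} and Cor.~\ref{cor:center-cover} do not help either: they only permit one common factor $\alpha$ on all base weights, never independent adjustments.

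This is not a removable technicality. Suppose $\signal_1,\dots,\signal_4$ have unique critical pairs $(x\ge0,0),(y\ge0,0),(x\ge0,1),(y\ge0,1)$, respectively. Near the center, realizable weights keep $r_1r_4/(r_2r_3)$ equal to a constant fixed by the signal values. So realizable weights cannot fill an $\varepsilon$-ball around $\boldsymbol{1}_d$, even when free base weights can. This coupling is exactly why the paper imposes Cond.~4 of Pb.~\ref{prob:rank_realizability} and moves to the $\log$-domain MILP.

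The right way to finish is to read hypothesis and conclusion consistently. If the $\widetilde{\wval}$ in Def.~\ref{def:indep-sens-dir} means $\widetilde{\wval}(\wval)$ for some $\wval\in\reals^p_{>0}$, the transfer to operator weights is immediate and no surjectivity is needed. If it ranges freely, the conclusion concerns base weights, which is all the paper's proof asserts.
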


\begin{proof}
Let $\perm \in \permspace^d$ and $\boldsymbol{r}_\perm \in \mathring{\robspace}_\perm$ (where $\mathring{}$ denotes the interior of a set) such that $\|\boldsymbol{r}_\perm-\boldsymbol{1}_d\|_\infty < \varepsilon$.
Since sensitivity directions are independent around 
$\boldsymbol{1}_d$
there exists $\widetilde{\wval}_\perm$ achieving $\boldsymbol{\rho}_\phi(\widetilde{\wval}_\perm) = r_\perm$.
As $\perm$ and $\boldsymbol{r}_\perm$ are arbitrary, it follows that $\boldsymbol{\rho}_{\phi}$ can achieve all rankings.
\end{proof}

A sufficient practical condition for local independence around 
$\boldsymbol{1}_d$
is the existence of $\widetilde{\wval}^\star\in K(\phi)$ such that each signal's \wstlrob is determined by a distinct and unique critical path.  That is, for each signal, we have $\wrob(\signal_i, \phi_{\widetilde{\wval}^{\star}}) = \widetilde{\wval}^{\star}_{\predicate_{i}^\dagger, t_i^\dagger}h_{\predicate_{i}^\dagger}(\signal(t_i^\dagger))$, where $(\predicate_i^\dagger, t_i^\dagger)$ pairs are unique critical predicate-time pairs, and each signal has a distinct critical predicate-time pair, then each signal's \wstlrob can be independently rescaled by
substituting $\widetilde{\wval}_{\predicate_i^\dagger, t_i^\dagger}$ in place of $\widetilde{\wval}^{\dagger}_{\predicate_i^\dagger, t_i^\dagger}$.
A small scaling for each $\widetilde{\wval}^{\star}_{\predicate_{i}^\dagger, t_i^\dagger}h_{\predicate_{i}^\dagger}$ will not change the critical path but will perturb $\wrob(\signal_i, \phi_{\widetilde{w}})$ by at most $\varepsilon$. This ensures local control over rankings, and we aim to find such base weights:
\begin{problem}\label{prob:rank_realizability}
    Given $\set{X}$, and $\phi$, find $\widetilde{\wval}^{\star} \in \reals_{>0}^{\tilde{p}}$ such that 
    \begin{enumerate}
        \item its robustness vector is in $\mathscr{C}$, i.e., $\boldsymbol{\rho}_\phi(\widetilde{\wval}^{\star}) = \mathbf{1}_d$,
        \item for each signal $\signal_i \in \set{X}$, its robustness value is determined by a unique critical path,
        \item each signal's critical path is distinct, and
        \item base weights $\widetilde{\wval}^{\star}$ are $\widetilde{w}_{\predicate,t}(\wval^{\star}) = \hspace{-0.2em}\prod\limits_{(\varphi, t') \in \fpath_\phi(\predicate,t)} \hspace{-0.2em}\wval^{\star}_{\varphi,t'}$.
    \end{enumerate}
\end{problem}

We formalize the feasibility of Pb.~\ref{prob:rank_realizability} as a sufficient condition for rank-realizability.

\begin{theorem}[PWSTL Rank-Realizability]\label{thm:rank-real-equiv}
    If Pb.~\ref{prob:rank_realizability} is feasible, then PWSTL formula $\phi$ is rank-realizable.
\end{theorem}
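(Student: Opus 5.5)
The plan is to reduce the statement to Th.~\ref{thm:indep-sufficient}. We show that a feasible point $\widetilde{\wval}^\star$ of Pb.~\ref{prob:rank_realizability} makes the sensitivity directions locally independent around $\mathbf{1}_d$, in the sense of Def.~\ref{def:indep-sens-dir}. Condition~1 gives $\widetilde{\wval}^\star\in K(\phi)$, so $K(\phi)$ is non-empty and Def.~\ref{def:indep-sens-dir} applies. Condition~4 ensures that every base-weight vector we use comes from an actual operator-weight valuation $\wval$, so the rankings we produce are rankings of $\phi_{\wval}$.

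\textbf{Step 1: the perturbation.} For each signal $\signal_i$, let $(\predicate_i^\dagger,t_i^\dagger)$ be its unique critical pair; by condition~3 these pairs are distinct. Fix a target $\boldsymbol r$ with $\|\boldsymbol r-\mathbf 1_d\|_\infty\le\varepsilon$. Define $\widetilde{\wval}$ by multiplying the base weight of $(\predicate_i^\dagger,t_i^\dagger)$ by $r_i$ and leaving all other base weights unchanged. Since the pairs are distinct, each signal's critical value is rescaled by exactly its own factor $r_i$.

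\textbf{Step 2: the critical paths survive.} This is the main obstacle. A single base weight $\widetilde{\wval}_{\predicate_i^\dagger,t_i^\dagger}$ is shared by all signals, so rescaling it for $\signal_i$ also changes the value of that pair for every $\signal_j$, $j\neq i$. We must show that the min/max tree still selects the same critical pair for every signal.

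We use uniqueness of the critical path (condition~2) to obtain a positive margin. Walk along the path of $\signal_j$ from its critical pair to the root. At each min (resp.\ max) node, the critical child attains the optimum, and uniqueness means every competing sibling is strictly larger (resp.\ smaller). Let $\gamma>0$ be the smallest relative gap over all signals, all nodes on the critical paths, and all siblings; this is a finite minimum. Multiplying any base weight by a factor in $[1-\varepsilon,1+\varepsilon]$ changes every subformula value by at most a factor in the same interval. This uses monotonicity and positive homogeneity of $\min$/$\max$, as in Prop.~\ref{prop:base-homogeneity}, applied subtree-wise, together with the positivity of all robustness values at the center. Hence, for $\varepsilon$ small relative to $\gamma$, no sibling overtakes the critical child. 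Every signal keeps its critical pair, and we get $\wrob(\signal_i,\phi_{\widetilde{\wval}})=r_i\cdot 1=r_i$.

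One care point is that non-critical predicate values may be negative or zero. The relative-gap argument must then be phrased with absolute gaps: we bound how far each node value moves, using the finitely many fixed values $h_\predicate(\signal_j(t))$, and pick $\varepsilon$ so that this movement is less than half the smallest absolute gap. We would handle this by taking $\varepsilon$ proportional to the minimal gap divided by the maximal magnitude of the weighted predicate values.

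\textbf{Step 3: conclude.} Step~2 shows $\boldsymbol\rho_\phi(\widetilde{\wval})=\boldsymbol r$ for every $\boldsymbol r$ in the $\varepsilon$-ball, which is exactly the local independence of Def.~\ref{def:indep-sens-dir}. Th.~\ref{thm:indep-sufficient} then yields all $d!$ rankings. Finally, to go back from base weights to operator weights, we choose each perturbed $\widetilde{\wval}$ to lie in the image of the map $\wval\mapsto\widetilde{\wval}(\wval)$. The simplest way is to rescale only the innermost operator weight adjacent to $(\predicate_i^\dagger,t_i^\dagger)$, assuming that weight belongs to that pair alone (e.g.\ the per-time weight of the innermost temporal operator, or the conjunct weight directly above the predicate). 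If it is shared, we instead argue, via condition~4 and the tree representation in which repeated subformulas are renamed, that each leaf has a private weight on its path. This last point is a secondary obstacle to check.
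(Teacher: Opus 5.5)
\textbf{Steps 1--2.} Your Steps 1--2 follow the paper's route. Cond.~1 puts $\widetilde{\wval}^\star$ in $K(\phi)$, uniqueness of critical paths gives strict margins, the distinct critical base weights are rescaled by $r_i$, and Thm.~\ref{thm:indep-sufficient} is invoked. Your Step~2 is more careful than the paper's sketch. The paper takes $\varepsilon=\min(\min_i\varepsilon_i,0.5)$ and tracks only the movement of $\signal_i$'s own critical value. You correctly note that rescaling $\signal_i$'s critical weight also moves siblings on $\signal_j$'s path, which forces a smaller $\varepsilon$.

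\textbf{The gap: the lift to operator weights.} Your proposed fix fails. In the paper's weight model, weights are attached to syntactic operators and reused at every evaluation time, so a leaf generally has no private weight on its path. The paper's own example $\G_{[0,1]}((x\ge0)\land(y\ge0))$ has $\widetilde w_{x,t}=w^{\G}_t w^{\land}_x$ and $\widetilde w_{y,t}=w^{\G}_t w^{\land}_y$, so every factor is shared by two leaves. Renaming repeated subformulas does not help, because this is one subformula evaluated at two times. Cond.~4 certifies only that the center $\widetilde{\wval}^\star$ is realizable, not the perturbed base weights at which the rankings are produced. The paper's sketch makes the same leap.

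\textbf{A counterexample.} The gap cannot be closed under the stated hypotheses. For the formula above, let $\signal_1,\dots,\signal_4$ have $x(0)=1$, $y(0)=1$, $x(1)=1$, $y(1)=1$ respectively, with all other entries equal to $2$. With all weights equal to $1$, Pb.~\ref{prob:rank_realizability} is feasible: every robustness equals $1$, and the critical pairs are unique and distinct. Write $W_1=w^{\G}_0w^{\land}_x$, $W_2=w^{\G}_0w^{\land}_y$, $W_3=w^{\G}_1w^{\land}_x$, $W_4=w^{\G}_1w^{\land}_y$. Then $\wrob(\signal_k,\phi_{\wval})=\min(W_k,2\min_{j\neq k}W_j)$. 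If $W_k\le W_j$, let $O$ be the minimum of the other two $W$'s. Then $\wrob(\signal_k,\phi_{\wval})=\min(W_k,2O)\le\min(W_j,2W_k,2O)=\wrob(\signal_j,\phi_{\wval})$. Hence the ranking $\signal_1\succ\signal_4\succ\signal_2\succ\signal_3$ would force $W_1>W_4>W_2>W_3$, so $W_1W_4>W_2W_3$. This contradicts the identity $W_1W_4=W_2W_3$, so the ranking is unreachable.

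\textbf{What closes the argument.} You need a rank condition: the $0/1$ incidence vectors, over the operator weights, of the $d$ critical paths must be linearly independent. Then a small perturbation $\delta$ of $\log\wval$ can hit $\log r_i$ exactly on each critical pair. That $\delta$ also moves every non-critical base weight by a factor $e^{O(\|\delta\|)}$. Your Step~2 band argument already handles perturbations of all leaves, including non-positive ones, so it finishes the proof. Alternatively, the conclusion holds as stated if the parameters are taken to be free base weights.
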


\begin{proof}[Sketch]
We show that conditions in Pb.~\ref{prob:rank_realizability} imply independent sensitivity directions around $\widetilde{\wval}^\star$. Cond. 1 ensures $\widetilde{\wval}^\star \in K(\phi)$, and $\wrob(\signal_i, \phi_{\widetilde{\wval}^{\star}})=1$ for all signals. Since each signal's critical path is unique (Cond. 2), at each subformula along the unique critical path, there is a unique optimizer in the recursions~\eqref{eq:quantsemantics}-\eqref{eq:quantsemanticsdr}. Otherwise, there would exist another critical path, contradicting uniqueness.

Let $(\predicate^\dagger_i, t^\dagger_i)$ denote the critical predicate-time pair of signal $s_i$. 
Let $\varepsilon_i = \min_{\varphi,t} |\wrob_c(\signal,\varphi,t) - 1|$, where the minimum is over the critical values of non-critical children of subformulas along $\fpath_\phi(\mu^\dagger_i, t^\dagger_i)$.
We have $\varepsilon_i > 0$ following from the uniqueness of the critical path. 
Define $\varepsilon = \min(\min_i(\varepsilon_i), 0.5)$ where bound $0.5$ ensures $\varepsilon$-ball lies in the positive quadrant. Take any $\|\boldsymbol{r}-\boldsymbol{1}_d\|_\infty \leq \varepsilon$. Since $|r_i - 1| \leq \varepsilon_i$, the critical 
path for $\signal_i$ is preserved under any weight perturbation 
achieving $\wrob(\signal_i, \phi_{\widetilde{\wval}}) = r_i$. By Cond.~3, each $r_i$ depends on a distinct base weight $\widetilde{\wval}_{\predicate_i^{\dagger}, t_i^{\dagger}}$, and $\widetilde{\wval}_{\predicate_i^{\dagger}, t_i^{\dagger}}$ can be adjusted independently to achieve $r_i$, guaranteeing existence of $\widetilde{\wval}$ with $\boldsymbol{\rho}_\phi(\widetilde{\wval}) = \boldsymbol{r}$, making the sensitivity directions locally independent around $\boldsymbol{1}_d$. By Thm.~\ref{thm:indep-sufficient}, $\boldsymbol{\rho}_\phi$ achieves all rankings. Finally, Cond. 4 ensures the existence of weights $\wval \in \reals^{p}_{>0}$ realizing $\widetilde{\wval}^\star$, required by Defn.~\ref{def:base-weights}.
\end{proof}

We now formulate Pb.~\ref{prob:rank_realizability} as an MIP, then linearize it.

\subsection{An MIP Encoding for Rank-Realizability} \label{sec:rank-realizability-base-weights}
Building on Pb.~\ref{prob:rank_realizability}, we seek a weight valuation that induces unit \wstlrob values for all signals, where each signal's robustness is determined by a unique predicate-time instance. If such a configuration exists with a strictly positive margin $\varepsilon$, small independent rescalings of the base weights can achieve any desired ranking, providing a sufficient certificate of rank-realizability. We formulate this as an MIP using binary variables and the big-M method.

\emph{Decision variables:} Let $\xi^i_{\phi, t}, \eta^i_{\phi, t} \in \mathbb{B}$ encode upper and lower bounds on $\wrob(\signal_i,\phi,t)$, respectively. That is, $\xi^i_{\phi, t}=1$ implies $\wrob(\signal_i, \phi, t) \leq 1$, otherwise unconstrained. Similarly, $\eta^i_{\phi, t}=1$ if $\wrob(\signal_i, \phi, t) \geq 1$, otherwise unconstrained. Let $\widetilde{\wval}_{\predicate, t} \in \mathbb{R}_{>0}$ be the base weight for predicate $\predicate$ at time $t$, shared across all signals. Let $\epsilon^i_{\predicate, t} \geq 0$ be a slack for $(\predicate, t)$, for each $\signal_i$. Let $\zeta^i_{\predicate,t}\in \mathbb{B}$ indicate whether $(\predicate, t)$ is critical for $\signal_i$, and $\varepsilon \geq 0$ be the global slack margin, to be maximized. 

\emph{Predicate Constraints} ($\phi = \predicate$) are encoded as follows: 
\begin{align}
    \widetilde{w}_{\predicate, t} h_\predicate(\signal_i(t)) + \epsilon^i_{\predicate, t} -1 &\leq M (1 - \xi^i_{\predicate, t}) \label{eq:milp-pred-upper-bound},\\
    \widetilde{w}_{\predicate, t}h_\predicate(\signal_i(t)) - \epsilon^i_{\predicate, t} -1&\geq -M (1 - \eta^i_{\predicate, t}) \label{eq:milp-pred-lower-bound},\\
    \zeta^i_{\predicate, t} & \le  \xi^i_{\predicate, t}, \label{eq:milp-flag1} \\
     \zeta^i_{\predicate, t} & \le \eta^i_{\predicate, t}, \label{eq:milp-flag2}\\
     \zeta^i_{\predicate, t} & \ge \xi^i_{\predicate, t}+\eta^i_{\predicate, t} -1, \label{eq:milp-flag3}\\
    \epsilon^i_{\predicate, t} + M \zeta^i_{\predicate, t} &\geq \varepsilon \label{eq:milp-pred-margin}.
\end{align}

Constraints \eqref{eq:milp-pred-upper-bound}-\eqref{eq:milp-pred-lower-bound} gate each weighted predicate value relative to the center via big-M. When both $\xi^i_{\predicate, t}$ and $\eta^i_{\predicate, t}$ are set to one, it enforces $\wrob(\signal_i, \predicate, t) =1$. Constraints \eqref{eq:milp-flag1}-\eqref{eq:milp-flag3} set $\zeta^i_{\predicate,t}=1$ if and only if both bounds are tight, marking $(\predicate,t)$ as critical for $\signal_i$. Constraint \eqref{eq:milp-pred-margin} enforces a uniform margin $\varepsilon$ at all non-critical pairs.

\emph{Conjunction constraints} ($\phi=\land^{\weight}_{\ell\in \range{1}{n}} \phi_\ell$) are as follows:
\begin{align}
    \eta^i_{\phi, t} &= \eta^i_{\phi_\ell, t}, \forall \ell \in \range{1}{n} \quad \text{(Copy-Op)}, \label{eq:copying-conj}\\
    \sum_{\ell=1}^{n} \xi^i_{\phi_\ell, t} &\geq \xi^i_{\phi, t},\label{eq:ub-child-conj1}\\
    \sum_{\ell=1}^{n} \xi^i_{\phi_\ell, t} &\leq \xi^i_{\phi, t} + (n-1) (1 - \eta^i_{\phi, t}),\\
    \sum_{\ell=1}^{n} \xi^i_{\phi_\ell, t} &\leq n \cdot \xi^i_{\phi, t} + \eta^i_{\phi, t}\label{eq:ub-child-conj3}.
\end{align}

The Copy-Op propagates the lower bound to all children. Constraints \eqref{eq:ub-child-conj1}-\eqref{eq:ub-child-conj3} enforce that exactly one child is tight on the upper bound in the centered case, identifying the unique $\min$-determining child.

\emph{Disjunction constraints} ($\phi=\lor^{\weight}_{\ell\in \range{1}{n}} \phi_\ell$) are encoded symmetrically by swapping $\xi \leftrightarrow \eta$ in \eqref{eq:copying-conj}-\eqref{eq:ub-child-conj3}, with Copy-Op applied to the upper bound. The \emph{Always} ($\phi= \G^{\weight}_{[\lb,\ub]}\varphi$) and \emph{Eventually constraints} ($\phi= \F^{\weight}_{[\lb,\ub]}\varphi$) follow the same pattern as conjunction and disjunction respectively, acting over time interval $I=\range{\lb}{\ub}$ rather than subformula indices.

\emph{The \wstlrob value and independence constraints} are encoded as follows
\begin{equation}
    \xi^i_{\phi,0} = \eta^i_{\phi,0} = 1, \; \forall i \in 
    \range{1}{d}, \label{eq:zero-robustness-constraint}
\end{equation}
\begin{equation}
    \sum_{i=1}^d \zeta^i_{\predicate,t} \leq 1, \; \forall 
    \predicate, t. \label{eq:local-independence}
\end{equation}
Constraint \eqref{eq:zero-robustness-constraint} fixes all signals' robustness to the center. Constraint \eqref{eq:local-independence} ensures each predicate-time pair is critical for at most one signal, yielding locally independent weight adjustment directions. After adding base weights conditions 
\begin{equation}\label{eq:baseweights}
    \widetilde{w}_{\predicate,t} = \prod_{(\varphi, t') \in \fpath_\phi(\predicate,t)} w_{\varphi,t'},
\end{equation}
for all base weights,
the final problem is:
\begin{equation}
\label{eq:decision-milp}
    \max_{\xi^i_{\phi,t}, \eta^i_{\phi,t}, \zeta^i_{\phi,t}, 
    \widetilde{w}_{\predicate,t}, \epsilon^i_{\predicate,t}, 
    \varepsilon} \varepsilon \quad \text{s.t.} \quad 
    \eqref{eq:milp-pred-upper-bound}\text{-}
    \eqref{eq:baseweights}.
\end{equation}

\begin{theorem}[MIP Encoding Correctness] \label{thm:milp-correctness}
Given $\set{X}$ and $\phi_\weightset$ in PNF, \eqref{eq:decision-milp} is a sound and complete encoding of Pb.~\ref{prob:rank_realizability}: if feasible with $\varepsilon^*>0$, each signal has a unique critical predicate-time pair distinct from all others, and $\boldsymbol{\rho}_\phi(\widetilde{\wval}^*)=\mathbf{1}_d$. Conversely, any solution to Pb.~\ref{prob:rank_realizability} with strict separation at non-critical pairs yields a feasible \eqref{eq:decision-milp} with $\varepsilon>0$.
\end{theorem}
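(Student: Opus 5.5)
We prove the two directions separately, each by structural induction over the syntax tree of $\phi$ (PNF, with repeated subformula occurrences renamed so every subformula-time pair is a distinct node). The invariant linking the binary variables to the semantics is: for a feasible point of \eqref{eq:decision-milp} with base weights $\widetilde{\wval}^*$, the flag $\xi^i_{\varphi,t}=1$ implies $\wrob(\signal_i,\varphi,t)\le 1$, the flag $\eta^i_{\varphi,t}=1$ implies $\wrob(\signal_i,\varphi,t)\ge 1$, and whenever a flag is $0$ the corresponding strict inequality holds with margin at least $\varepsilon^*$. Negations appear only at predicates, so they are absorbed into $h_\predicate$. Cond.~4 of Pb.~\ref{prob:rank_realizability} is exactly \eqref{eq:baseweights}, so it needs no further argument.

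\emph{Soundness.} Assume a feasible point with $\varepsilon^*>0$. At the leaves, \eqref{eq:milp-pred-upper-bound}--\eqref{eq:milp-pred-margin} give the invariant. When $\zeta^i_{\predicate,t}=0$ we have $\epsilon^i_{\predicate,t}\ge\varepsilon^*$, so any active bound is strict with slack. Constraints \eqref{eq:milp-flag1}--\eqref{eq:milp-flag3} force $\zeta^i_{\predicate,t}=1$ exactly when both bounds are active, which gives $\widetilde{w}_{\predicate,t}h_\predicate(\signal_i(t))=1$.

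For the induction step at a conjunction (and, symmetrically, a disjunction, Always, or Eventually node) with $\eta^i_{\phi,t}=1$, Copy-Op makes every child $\ge 1$. Constraints \eqref{eq:ub-child-conj1}--\eqref{eq:ub-child-conj3} with $\eta=1$ force $\sum_\ell\xi^i_{\phi_\ell,t}=\xi^i_{\phi,t}$. Hence if $\xi^i_{\phi,t}=1$, exactly one child is $\le 1$ (so it equals $1$) and the remaining children are strictly above $1$. The $\min$ is therefore attained uniquely at that child, and $\wrob(\signal_i,\phi,t)=1$.

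Starting from the root via \eqref{eq:zero-robustness-constraint}, this yields a single descending chain of nodes with both flags equal to $1$, ending at a predicate-time pair with $\zeta^i=1$. That chain is a critical path, and it is unique because every sibling along it is strictly off $1$. This gives Conds.~1--2, and \eqref{eq:local-independence} gives Cond.~3.

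Two technicalities need care. First, when a node carries only one active flag, the subtree below it must not create spurious $\zeta=1$ leaves. This is ruled out because the problem only needs the value bound there, and a leaf with $\zeta=1$ off the chain would still be harmless, since uniqueness is established along the chain. Second, the big-M constant must upper bound all relevant $|\widetilde{w}h-1|$ values; we assume a bounded weight box, as is standard.

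\emph{Completeness.} Take a solution $\widetilde{\wval}^\star$ of Pb.~\ref{prob:rank_realizability} with strict separation at non-critical pairs. Set $\varepsilon$ to the minimum over all relevant nodes of $|\wrob(\signal_i,\varphi,t)-1|$, which is positive by strict separation and finiteness. Set $\xi^i_{\varphi,t}=1$ iff $\wrob\le1$, $\eta^i_{\varphi,t}=1$ iff $\wrob\ge1$, $\zeta^i$ as the indicator of the unique critical pair, and $\epsilon^i_{\predicate,t}=|\widetilde{w}h-1|$ off the critical pair and $0$ on it. Each constraint is then checked node by node: the min/max characterization of the critical child gives \eqref{eq:ub-child-conj1}--\eqref{eq:ub-child-conj3}, and distinctness gives \eqref{eq:local-independence}.

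The expected main obstacle is Copy-Op. As written, it forces $\eta$ down to \emph{all} children of a conjunction even when $\eta^i_{\phi,t}=0$. Such a node may have children below $1$ (on non-critical branches), which would violate the copied lower bound. I would handle this by showing that Copy-Op is only ever binding with value $1$ on nodes of the critical chain, where every child is indeed $\ge1$. Off the chain, the flag assignment can be chosen consistently, e.g.\ by taking flags $0$ wherever permitted, and the big-M slack makes those constraints vacuous. Verifying that this choice is always consistent with the parent's sum constraints is the delicate part of the completeness direction.
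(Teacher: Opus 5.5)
Your route is the paper's. Soundness is by structural induction, with Copy-Op and \eqref{eq:ub-child-conj1}--\eqref{eq:ub-child-conj3} carving a single $(1,1)$ chain down from \eqref{eq:zero-robustness-constraint}. Completeness is by an explicit flag and slack construction. Soundness goes through, but your invariant needs restating. The claim ``whenever a flag is $0$ the strict inequality holds with margin $\varepsilon^*$'' is false at nodes with $(\xi,\eta)=(0,0)$, and these occur in feasible points. For example, Copy-Op forces $\eta=0$ on every child of a $(1,0)$ conjunction, and a child whose value is above $1$ must then also take $\xi=0$. The correct invariant is: $(0,1)$ implies value $\ge 1+\varepsilon^*$, $(1,0)$ implies value $\le 1-\varepsilon^*$, and $(0,0)$ gives nothing. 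The non-chain children of chain nodes carry exactly one active flag, so your uniqueness argument survives. Your worry about spurious $\zeta=1$ leaves is moot, because the constraints never admit a $(1,1)$ child below a node with at most one active flag.

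The genuine gap is in completeness. The assignment you write down ($\xi=1$ iff value $\le1$, $\eta=1$ iff value $\ge1$) is infeasible, because Copy-Op \eqref{eq:copying-conj} is an equality. An off-chain conjunction with value below $1$ (so $\eta=0$) that has a child with value above $1$ (so $\eta=1$) violates it, and dually for disjunctions. Your diagnosis also points the wrong way. Copy-Op with $\eta=1$ at a $\min$ node is never violated, since $\min\ge1$ forces every child to be $\ge1$. The conflict only arises when a parent flag $0$ meets a child flag $1$. By leaving the repair as ``the delicate part,'' the converse direction is left unproved.

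The fix is a top-down assignment instead of an iff rule:
\begin{itemize}
\item Put $(1,1)$ on each signal's critical chain.
\item The non-chain children of a chain conjunction (resp.\ disjunction) get $(0,1)$ (resp.\ $(1,0)$). This is legitimate because uniqueness of the critical path puts them strictly above (resp.\ below) $1$; otherwise the leaf attaining such a child's value would give a second critical path.
\item A $(0,1)$ conjunction or a $(1,0)$ disjunction passes its pair to all children.
\item A $(0,1)$ disjunction or a $(1,0)$ conjunction passes its pair only to the children strictly on the correct side of $1$ (at least one exists) and gives $(0,0)$ to the rest.
\item $(0,0)$ propagates as $(0,0)$.
\end{itemize}
Each operator's constraints then reduce to a four-case check over the parent's $(\xi,\eta)$.

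For the slacks, take $\epsilon^i_{\predicate,t}=|\widetilde{w}_{\predicate,t}h_\predicate(\signal_i(t))-1|$ at one-flag leaves, $0$ at critical leaves, and $\varepsilon$ at $(0,0)$ leaves. Take $\varepsilon$ as the minimum over one-flag leaves; it is positive by construction, so the theorem's strict-separation hypothesis is implied by uniqueness. Choose $M$ afterwards from the given weights, so no bounded weight box is needed.
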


\begin{proof}[Sketch]
    Soundness follows by structural induction over the formula tree\cite{hopcroft2001introduction, cardona2023flexible}. The base case (predicate) follows directly from \eqref{eq:milp-pred-upper-bound}-\eqref{eq:milp-pred-lower-bound}. When $\xi^i_{\predicate,t}=1$ the model enforces $\widetilde{w}_{\predicate,t} h_\predicate (\signal_i(t))\leq1$. Similarly, $\eta^i_{\predicate,t}=1$ enforces $\widetilde{w}_{\predicate,t} h_\predicate (\signal_i(t))\geq1$; the big-M terms relax the inequality if the indicator is 0. Then, \eqref{eq:milp-flag1}-\eqref{eq:milp-flag3} make $\zeta^i_{\predicate, t}=1$ iff both bounds are active, making $(\predicate,t)$ as the critical pair for $\signal_i$. For the induction step, $\min$-type operators (conjunction, always) propagate lower bounds to all children via Copy-Op while requiring at least one child to be tight on the upper bound; $\max$-type operators do the reverse. Constraint \eqref{eq:zero-robustness-constraint} initiates tightness at the root, which propagates to exactly one critical predicate-time pair per signal. Constraint \eqref{eq:local-independence} guarantees these pairs are distinct, and \eqref{eq:milp-pred-margin} ensures non-critical pairs are separated by $\varepsilon^*>0$.
    Completeness follows by construction: given critical predicate-time pairs $(\predicate_i,t_i)$, set $\zeta^i_{\predicate_i,t_i}=1$, propagate indicators upward, and choose slacks as absolute deviations from one.
\end{proof}

\subsection{MILP encoding in the Original Weight Space} \label{sec:rank-realizability-original-weights}

If we treat base weights $\widetilde{w}_{\predicate,t}$ as independent decision variables by removing Eq.~\eqref{eq:baseweights}, meaning not as a multiplication of weights but as a single variable, \eqref{eq:decision-milp} is already an MILP. However, a feasible solution in the base-weight space does not always correspond to a feasible valuation in the original weight space, since base weights must satisfy multiplicative relations across shared operator weights. Enforcing these dependencies directly leads to multi-linear constraints, turning the problem into an MINLP.

\begin{example}Take $\phi_1 = \G_{[0,1]}\phi$ of Ex.~\ref{example:1} with base weights $\widetilde{w}_{x\ge0,0} =w^{\G}_{0}w_{x\ge0}^{\land}$, $\widetilde{w}_{y\ge0,0} = w^{\G}_{0}w_{y\ge0}^{\land}$, $\widetilde{w}_{x\ge0,1} = w^{\G}_{1}w_{x\ge0}^{\land}$, $\widetilde{w}_{y\ge0,1} = w^{\G}_{1}w_{y\ge0}^{\land}$. The base-weight solution $\widetilde{w}_{x\ge0,0}=\widetilde{w}_{y\ge0,0}=\widetilde{w}_{x\ge0,1}=10$, $\widetilde{w}_{y\ge0,1}=12.5$ may satisfy \eqref{eq:decision-milp}, yet these values are not realizable as products of any $\wval \in \reals_{>0}^4$ since the ratio $\widetilde{w}_{y\ge0,1}/\widetilde{w}_{x\ge0,1} \neq \widetilde{w}_{y\ge0,0}/\widetilde{w}_{x\ge0,0}$ violates Cond. 4 of Pb.~\ref{prob:rank_realizability}.
\end{example}

To recover an MILP in the original weight space, we follow the 
structural pruning and $\log$-transform approach of \cite{karagulle2025sopl}. Structural pruning removes paths whose \wstlrob values do not share the same sign with the overall robustness, thus cannot affect it (e.g., a negative value in a $\max$ dominated by positive operands), leaving only paths that share the same robustness sign with the overall robustness. The $\log$-transform then converts multiplicative weight relations into additions, which is valid thanks to the strictly positive definition of weights and the strictly positive \wstlrob assumption on all signals in $\set{X}$. In particular, we begin by applying structural pruning to signals. We then replace \eqref{eq:milp-pred-upper-bound}--\eqref{eq:milp-pred-lower-bound} by their $\log$-domain counterparts: 
\begin{align}
    \sum_{(\varphi, t') \in \fpath_\phi(\predicate,t)} \log(w_{\varphi',t}) + \log(h_\predicate(\signal_i)) + \epsilon^i_{\predicate,t} &\le M(1-\xi^i_{\predicate,t}),\label{eq:log-pred-upper-bound}\\
    \sum_{(\varphi, t') \in \fpath_\phi(\predicate,t)} \log(w_{\varphi',t}) + \log(h_\predicate(\signal_i)) - \epsilon^i_{\predicate,t} &\ge -M(1-\eta^i_{\predicate,t})\label{eq:log-pred-lower-bound}.
\end{align}

Substituting $v_{\varphi',t} = \log(w_{\varphi',t})$ linearizes the weight terms, and $\log(h_\predicate(\signal_i))$ is precomputed from the given signals, yielding a full MILP in the original weight space. Finally, we set $\log \wrob(\signal_i, \phi_{\wval}) = \log(1)=0$.

\begin{theorem}\label{thm:milp-equiv}
    Given $d$ signals and $\phi$, if maximizer $\varepsilon^*$ of the $\log$-transform of ~\eqref{eq:decision-milp} (after structural pruning) is strictly positive, then $\phi$ can achieve all rankings. 
\end{theorem}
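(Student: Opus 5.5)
The plan is to reduce the statement to Thm.~\ref{thm:rank-real-equiv}. We show that a feasible point of the $\log$-transformed \eqref{eq:decision-milp} with $\varepsilon^*>0$ yields a feasible point of Pb.~\ref{prob:rank_realizability} in the \emph{original} weight space.

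First, from an optimal solution $(v^*,\xi^*,\eta^*,\zeta^*,\epsilon^*,\varepsilon^*)$ we set $w^*_{\varphi,t}=\exp(v^*_{\varphi,t})>0$. These are valid, strictly positive original weights. We then define base weights $\widetilde{w}^*_{\predicate,t}$ through \eqref{eq:baseweights}. Since $\log$ is strictly increasing and $\log\widetilde{w}^*_{\predicate,t}=\sum_{\fpath}v^*$, constraints \eqref{eq:log-pred-upper-bound}--\eqref{eq:log-pred-lower-bound} say that $\log(\widetilde{w}^*_{\predicate,t}h_\predicate(\signal_i(t)))$ lies $\le 0$ (resp.\ $\ge 0$), with slack $\epsilon^i_{\predicate,t}$ whenever the indicator is active. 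Exponentiating, the weighted predicate value is $\le 1$ (resp.\ $\ge 1$), and for non-critical pairs it is separated from $1$ by a multiplicative factor $e^{\pm\varepsilon^*}\neq 1$. This recovers exactly the semantics of \eqref{eq:milp-pred-upper-bound}--\eqref{eq:milp-pred-lower-bound}, with a strictly positive additive margin $\min(1-e^{-\varepsilon^*},e^{\varepsilon^*}-1)>0$. Cond.~4 of Pb.~\ref{prob:rank_realizability} then holds by construction, because the base weights are defined as products of the $w^*$.

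Second, we invoke the soundness direction of Thm.~\ref{thm:milp-correctness}. The operator constraints (Copy-Op, \eqref{eq:ub-child-conj1}--\eqref{eq:ub-child-conj3}, their duals for disjunction and Eventually, and the temporal analogues) are unchanged by the transform. They involve only the binary indicators, and these indicators encode comparisons to $1$ in the original domain, which are equivalent to comparisons to $0$ in the $\log$ domain. The structural induction of Thm.~\ref{thm:milp-correctness} therefore goes through verbatim. It gives, for each $\signal_i$, $\wrob(\signal_i,\phi_{w^*})=1$ (from \eqref{eq:zero-robustness-constraint}), a unique critical predicate-time pair, and distinctness across signals (from \eqref{eq:local-independence}). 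So Conds.~1--3 hold, Thm.~\ref{thm:rank-real-equiv} applies, and $\phi$ achieves all rankings.

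The main obstacle is justifying the $\log$-transform after structural pruning. Taking $\log h_\predicate(\signal_i(t))$ requires positive predicate values, but some predicate-time pairs may be nonpositive even though $\rob(\signal_i,\phi)>0$. I would argue that pruning removes only pairs with nonpositive value, and that these are dominated. Under PNF with positive weights and positive overall robustness, any such pair sits below a $\max$ that has a positive operand, or below a $\min$ that would make the result nonpositive. The latter cannot happen on the path to a positive root, so the pruned pairs never influence $\wrob(\signal_i,\phi_{w})$ for any positive $w$. Consequently, neither the robustness values nor the critical paths change under pruning, and the margin condition \eqref{eq:milp-pred-margin} for pruned pairs holds automatically, since their weighted values stay $\le 0<1-\varepsilon$.

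A secondary point is that the independent perturbation in Thm.~\ref{thm:rank-real-equiv} changes base weights, and these must remain realizable by original weights. I would address this by perturbing a base weight only through the last operator weight on its critical path. If that weight is shared with other paths, the induced changes are small and covered by the margin; this is the step I expect to need the most care.
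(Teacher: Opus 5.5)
Your first two steps are fine and amount to a more careful version of the paper's part (i). Those steps are: exponentiating $v^*$; turning \eqref{eq:log-pred-upper-bound}--\eqref{eq:log-pred-lower-bound} back into the original predicate constraints with slack $1-e^{-\epsilon^i_{\predicate,t}}$ and margin $1-e^{-\varepsilon^*}$; reusing the unchanged indicator constraints to get Conds.~1--3 from the soundness half of Thm.~\ref{thm:milp-correctness}; and getting Cond.~4 for free.

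The gap is the point you call ``secondary''. It is the crux, and your fix fails. Suppose the weight you perturb for $\signal_i$ also lies on the \emph{critical} path of another signal $\signal_j$. Then $\wrob(\signal_j,\phi)$ is rescaled by exactly the same factor. That is an exact coupling, not a small error, and the margin cannot absorb it, because the margin only keeps critical paths from switching.

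With critical paths frozen, $\log\wrob(\signal_i,\phi_{\wval})=a_i^\top\log\wval+\log h_{\predicate_i^\dagger}(\signal_i(t_i^\dagger))$, where $a_i\in\{0,1\}^p$ marks the weights on $\signal_i$'s critical path. So what is actually needed is that $a_1,\dots,a_d$ be linearly independent. Distinct critical predicate-time pairs, which is all \eqref{eq:local-independence} enforces, do not give this. The paper's step (ii) (``the local independence structure is preserved'') asserts it without argument. Likewise, the sketch of Thm.~\ref{thm:rank-real-equiv} rescales base weights independently, although Cond.~4 only makes the center $\widetilde{\wval}^\star$ realizable. So the hole is shared, and your reduction to Thm.~\ref{thm:rank-real-equiv} inherits it.

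Without that extra hypothesis the statement is false. Take $\G_{[0,1]}\big((x\ge0)\land(y\ge0)\big)$ and set $c_1=w^{\G}_0w^{\land}_x$, $c_2=w^{\G}_0w^{\land}_y$, $c_3=w^{\G}_1w^{\land}_x$, $c_4=w^{\G}_1w^{\land}_y$. Then $c_1c_4=c_2c_3$ for every valuation; equivalently, $a_1+a_4=a_2+a_3$. Let $\signal_1,\dots,\signal_4$ have $(x(0),y(0),x(1),y(1))$ equal to $(1,2,2,2)$, $(2,1,2,2)$, $(2,2,1,2)$, $(2,2,2,1)$.

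With all weights equal to one:
each signal has robustness one, with unique, pairwise distinct critical pairs $(x,0),(y,0),(x,1),(y,1)$; every other weighted predicate value is $2$; nothing is pruned; and the $\log$-MILP is feasible with $\varepsilon^*\ge\log 2$.

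Yet $r_i=\min\big(c_i,\,2\min_{j\neq i}c_j\big)$, and $\min(r_1,r_4)>\max(r_2,r_3)$ is impossible. If $c_k$ is the smallest coefficient, then $r_k=c_k$ is the smallest robustness, which rules out $k\in\{1,4\}$. For $k=2$, $r_1,r_4>r_3$ forces $c_3<2c_2$ and $c_1,c_4>c_3\ge c_2$, hence $c_1c_4>c_2c_3$, a contradiction. The case $k=3$ is symmetric. So $4$ of the $24$ rankings are unreachable.

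The repair is to add linear independence of the $a_i$ as a hypothesis, and check it on the MILP solution. One sufficient form: each critical path carries a weight that lies on no other signal's critical path. Sharing that weight with \emph{non-critical} paths is then indeed covered by the margin, which is the correct version of your idea. Let $\Gamma$ be the matrix with rows $a_i^\top$. Then $\log\wval=v^*+\Gamma^\top(\Gamma\Gamma^\top)^{-1}\log\boldsymbol{r}$ is a small perturbation for $\boldsymbol{r}$ near $\mathbf{1}_d$, keeps every critical path, and yields $\boldsymbol{\rho}_\phi(\wval)=\boldsymbol{r}$.

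A minor point on pruning: it removes whole subtrees of nonpositive value, which can contain positive predicate values, so ``pruned values stay $\le 0$'' is inaccurate. The claim is also unnecessary, since a pruned subtree sits below a positive $\max$ and can never carry a critical path.
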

\begin{proof}[Sketch]
    We divide the proof into two parts. $(i)$ Sufficiency of $\varepsilon^*>0$ in the original problem follows from Thm.~\ref{thm:milp-correctness}. $(ii)$  Under the $\log$-transform, multiplicative weight relations become additive, and since $\min$/$\max$ are translation-invariant, the local independence structure of Defn.~\ref{def:indep-sens-dir} is preserved. Thus, $\varepsilon^*>0$ in the $\log$-domain implies rank-realizability in the original domain.
\end{proof}

\subsection{Signal sets with mixed robustness signs}
\label{sec:mixed-sign-signals}
In the previous section, we assumed that  $\set{X}$ contains only signals with positive robustness, $\wrob(\signal, \phi_\weight) >0$ for all $\signal \in \set{X}$ and for all $\weight \in \reals_{>0}^p$ since weights cannot change the sign of the robustness. In case $\set{X}$ contains signals with positive, negative and/or zero robustness, we have the following immediate consequences regardless of any weight choice: (a) all signals with positive robustness are ranked higher than all non-positive robustness signals (zero and negative); (b) all negative robustness signals are ranked lower than all non-negative robustness signals (zero and positive); and (c) all zero-robustness signals are ranked the same and cannot be distinguished via weights. Thus, we partition the signal set $\set{X} = \set{X}_{>0} \cup \set{X}_{=0} \cup \set{X}_{<0}$, where $\set{X}_{\sim 0} = \{\signal \in \set{X} \mid \wrob(\signal, \phi_\weightset) \sim 0 \}$ with $\sim \in \{>, <, =\}$. The best outcome we can hope for is to achieve all possible rankings of the positive and negative robustness signals separately. In this case, while keeping the unique and distinct critical path requirement for all signals in $\set{X}$, we check whether we can achieve $|\set{X}_{>0}|! \cdot |\set{X}_{<0}|! < |\set{X}|!$ rankings. Algorithmically, (a) we solve \eqref{eq:decision-milp} for $\set{X}_{>0}$ and $\phi_\weightset$, and (b) we solve \eqref{eq:decision-milp} for $\set{X}_{<0}$ and $\neg\phi_\weightset$ after converting it to PNF\footnote{Note that taking the PNF does not change the number of weights since none of the negation equalities introduce new weights.}. Note that unique and distinct critical paths allow us to perturb negative and positive robustness values independently. If both problems are feasible and result in positive global slack values, then we obtain the bound $|\set{X}_{>0}|! \cdot |\set{X}_{<0}|!$.

\section{Rank-capacity of WSTL formulas}
The rank-capacity problem concerns the expressivity of a formula across the entire signal domain, rather than a fixed finite set. We follow a computational approach to obtain lower bounds on $\textrm{RC}(\phi)$.

A natural extension for the rank-capacity is treating the signals in the rank-realizability formulation as decision variables as well. Formally, we have:
\begin{equation}
    \label{eq:decision-milp-capacity}
    \max_{\signal_1,\ldots,\signal_d, \widetilde{w}_{\phi, t}, \xi^i_{\phi, t}, \eta^i_{\phi, t}, \zeta^i_{\phi, t}, \epsilon^{i}_{\predicate, t}, \varepsilon} \varepsilon \suchthat \eqref{eq:milp-pred-upper-bound}-\eqref{eq:local-independence}.
\end{equation}

\begin{theorem}\label{thm:capacity-milp}
    If~\eqref{eq:decision-milp-capacity} is feasible with $\varepsilon^*>0$, then $\textrm{RC}(\phi)\ge d$. 
\end{theorem}

Proof follows the fact that the feasibility of ~\eqref{eq:decision-milp-capacity} with $\varepsilon^*>0$ is a sufficient condition for rank-realizability of $d$ signals by Thm.~\ref{thm:milp-correctness}. Treating signals as decision variables preserves the nonlinear nature of the problem. To recover an MILP, we take the logarithm of predicate-time values $h_{\predicate}(\signal(t))$ as unknowns instead of signals directly. Two issues arise from this substitution: $(i)$ positivity: $\log(h_{\predicate}(s(t)))$ is well-defined only when $h_{\predicate}(s(t))>0$, which cannot be guaranteed. For a feasible set of signals, even though the overall \wstlrob is positive for all, there might be predicates with negative values. $(ii)$ shared dimensions: if two different predicates depend on the same dimension of the signal, e.g., $\predicate_1: \signal^1 \ge 2$, and $\predicate_2: \signal^1 \le 5$, their predicate-time values e.g., $h_{\predicate_{1}}(s(t))$ and $h_{\predicate_{2}}(s(t))$ cannot be treated as independent variables. We address the issues one by one.

For $(i)$, we begin by assuming the predicates do not have a shared dimension issue, and show that existence of a predicate-value set $\set{X_\predicate} = \{h_{\predicate}(\signal_i(t))\}$ in the entire domain such that the formula is rank-realizable with respect to it implies the existence of a predicate-value set in the positive quadrant. 
This property allows us to use $\log$-transform for signal values as well.

\begin{theorem}\label{thm:predicate-positivity}
    Consider $\phi_{\weightset}$ in PNF with predicates $\predicate:= \signal^j \sim c_{\predicate}$, carrying relations $\sim \in \{\ge, \le\}$, where $\signal^j$ represents the $j^{\text{th}}$ dimension of $\signal$ and $c_{\predicate} \in \reals$ represents the threshold value. Assume each predicate is defined over a distinct signal dimension. If there exists a predicate-value set for $d$ signals in the entire domain, then there exists a predicate-value set for $d$ signals in the positive quadrant. 
\end{theorem}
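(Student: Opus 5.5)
Since each predicate acts on its own signal dimension, the predicate values $h_\predicate(\signal_i(t)) = \pm(\signal_i^j(t) - c_\predicate)$ can be chosen freely and independently over all signals and time instances. So I will treat the ``predicate-value set'' simply as a collection of reals $\{v^i_{\predicate,t}\}$ for which \eqref{eq:decision-milp-capacity} is feasible with some base weights $\widetilde{\wval}^\star$ and margin $\varepsilon^*>0$. The goal is to build, from such a feasible point, another feasible point in which every $v^i_{\predicate,t}$ is strictly positive and the margin is still positive. Given such a point, Thm.~\ref{thm:capacity-milp} together with the $\log$-transform applies unchanged.

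\textbf{Key observation.} At the feasible point every signal has weighted robustness $1$, determined by a unique critical pair $(\predicate_i^\dagger,t_i^\dagger)$ with $\widetilde{w}^\star_{\predicate_i^\dagger,t_i^\dagger} v^i_{\predicate_i^\dagger,t_i^\dagger}=1>0$. So only non-critical pairs can carry nonpositive values. I will handle these by structural induction on the PNF tree along each signal's critical path, considering the children of each operator on that path.

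\textbf{Induction step.}
\begin{itemize}
\item At a $\max$-type node (disjunction, eventually), the non-critical children have critical values at most $1-\varepsilon^*$. Raising any nonpositive leaf below them to a small positive value $\delta\le (1-\varepsilon^*)/\max\widetilde{w}^\star$ keeps every weighted leaf value at most $1-\varepsilon^*$. By monotonicity of $\min/\max$, those children stay at most $1-\varepsilon^*$.
\item At a $\min$-type node (conjunction, always), the non-critical children must already be at least $1+\varepsilon^*>0$. Their critical paths therefore end at positive leaves. Any nonpositive leaves inside such a subtree are not active, apart from $\max$-subtrees whose maximizer is positive. Raising such a leaf to $\delta$ small leaves the subtree value unchanged, because it was dominated by a positive optimizer with margin.
\end{itemize}
In both cases I replace each $v^i_{\predicate,t}\le 0$ by $\delta>0$, with $\delta$ small relative to $\varepsilon^*$ and $\min \widetilde{w}^\star$. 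The binary variables $\xi,\eta,\zeta$ and the critical values of all nodes on the critical paths stay unchanged. Constraints \eqref{eq:milp-pred-upper-bound}--\eqref{eq:milp-pred-margin} at the modified leaves then hold with slack at least $\min(\varepsilon^*, 1-\delta\max\widetilde{w}^\star)>0$. Constraints \eqref{eq:zero-robustness-constraint}--\eqref{eq:local-independence} are untouched because the critical pairs do not move.

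\textbf{Realizability and main obstacle.} The new values $\delta$ can be realized by actual signals, since each predicate owns its dimension: set $\signal_i^j(t)=c_\predicate\pm\delta$ with the sign matching $\sim$. No other predicate is affected. The step I expect to be hardest is the bookkeeping that a modified leaf cannot become a new optimizer anywhere above it. This matters most for leaves sitting under a $\max$ node that is itself a non-critical child of a $\min$ node. My plan there is to choose $\delta$ uniformly below $\min(1-\varepsilon^*,\varepsilon^*)/\max\widetilde{w}^\star$. I would then argue that a $\max$ node's value can only change if the new leaf value exceeds the previous maximum. That previous maximum is either a positive value on a satisfied branch, which is larger than $\delta\widetilde{w}^\star$, or it is nonpositive. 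In the nonpositive case the node lies on a pruned branch that does not share the robustness sign, so structural pruning lets us discard it.
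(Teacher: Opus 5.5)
The positive-robustness half of your argument is sound. It is in substance the paper's Case 1: nonpositive predicate values can only sit where they are dominated, so lifting them to a tiny $\delta>0$ changes nothing. You carry it out on the MILP certificate rather than on robustness values.

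The gap is in the reduction you make at the start. You identify a predicate-value set with a feasible point of \eqref{eq:decision-milp-capacity}, whose constraint \eqref{eq:zero-robustness-constraint} pins every robustness to $+1$. That restricts you to sets that are (a) certified by the MILP and (b) of all-positive robustness. The statement, as the paper proves it, is about any predicate-value set over which $\phi$ is rank-realizable, and rank-realizability only forces a common sign.

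The paper treats the all-negative case separately (Case 2), and $\delta$-lifting cannot handle it. There the critical predicate values are themselves negative, so lifting them changes the robustness values and can tie signals that were strictly ordered. What is missing is an order-preserving map from negatives to positives:
\begin{itemize}
\item Replace each nonnegative predicate value by $-\eta$ with $\eta>0$ tiny. Such a value is dominated inside a $\min$ by a negative operand, so nothing moves.
\item Send every $h<0$ to $-1/h$. The former nonnegative values become the large numbers $1/\eta$.
\end{itemize}
The map $g(y)=-1/y$ is increasing on $(-\infty,0)$ and satisfies $g(\widetilde{w}y)=g(y)/\widetilde{w}$. Hence it commutes with weighted $\min/\max$ once every operator weight is replaced by its reciprocal, which inverts every base weight. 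So each ranking survives.

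Point (a) is easy to repair by arguing per weight valuation, as the paper does. Fix one realizing valuation $\wval_\sigma$ for each of the $d!$ rankings, and choose $\delta$ with $\delta\max_{\predicate,t}\widetilde{w}_{\predicate,t}(\wval_\sigma)\le\min_i r(\signal_i,\phi_{\wval_\sigma})$ for every $\sigma$. Let $y$ be the weighted predicate values, $y'$ the lifted ones, and $c=\delta\max\widetilde{w}$. Then $y\le y'\le\max(y,c)$ componentwise. Since $x\mapsto\max(x,c)$ distributes over both $\min$ and $\max$, any nested $\min/\max$ expression $f$ satisfies $f(y)\le f(y')\le\max(f(y),c)=f(y)$. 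So no robustness value moves.

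Within your own framing, the bookkeeping you flag as hardest is unnecessary. You keep every binary fixed, so the operator constraints are untouched. Only \eqref{eq:milp-pred-upper-bound}--\eqref{eq:milp-pred-margin} at the modified leaves involve the new values. A nonpositive leaf can be neither critical nor carry $\eta=1$, since those have weighted value at least $1$. It therefore carries $\xi=1,\eta=0$ or is unconstrained, and the appeal to structural pruning is beside the point.

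There is also one slip. $\varepsilon^*\ge 1$ is perfectly possible, since maximizing the margin pushes dominated leaves far below zero. In that case your bound $\delta\le\min(1-\varepsilon^*,\varepsilon^*)/\max\widetilde{w}^\star$ admits no positive $\delta$. Instead take $\delta<1/\max\widetilde{w}^\star$ and lower the margin to $\min(\varepsilon^*,1-\delta\max\widetilde{w}^\star)$, as your own slack formula already allows.
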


\begin{proof}[Sketch]
    We treat the two cases to transfer the predicate-value set living in the entire domain to the positive quadrant. \emph{Case 1: all robustness values are strictly positive.}  Any predicate-time value $h_{\predicate}(\signal(t)) < 0$, because the overall robustness is positive $\rob(\signal_i,\phi)>0$, must be dominated inside a $\max$ operator by a positive operand; otherwise, the overall robustness would be negative. Replacing it with an arbitrarily small $\varepsilon>0$ leaves all $\max$ values and subsequent rankings unchanged. \emph{Case 2: All robustness values are strictly negative.} Any predicate-time value $h_{\predicate}(\signal(t)) > 0$ must be dominated inside a $\min$ operator by a negative operand; otherwise, the overall robustness would be positive. Therefore, replacing positive values with arbitrarily large numbers does not affect any $\min$. For predicate-time values $h_\predicate(s(t))<0$, replacing them with $-1/h_\predicate(s(t))$ preserves their relative ordering.
\end{proof}

Thm.~\ref{thm:predicate-positivity} justifies restricting the search 
to the positive quadrant, where the $\log$-transform is always defined.

For $(ii)$, when predicates share the same signal dimension, we use a surrogate encoding $\log(\signal_j)-\log(c_\predicate)$ in place of  $\log(\signal^j - c_\predicate)$. This preserves satisfaction due to $\log$'s monotonicity (whenever $\signal^j - c_\predicate > 0$, we have $\log(\signal^j)-\log(c_\predicate)>0$) but does not exactly preserve the zero level set of the original robustness. Therefore, candidate signals found via the surrogate must be certified by plugging them into the rank-realizability problem~\eqref{eq:decision-milp}. Feasibility with $\varepsilon>0$ then confirms $\textrm{RC}(\phi)\ge d$. In the case of negative thresholds $c_\predicate<0$, $\log(c_\predicate)$ is not defined. Thm.~\ref{thm:predicate_shifting} 
below shows that shifting all thresholds by $\kappa > |\min_\predicate(c_\predicate)|$ and remapping signals by the same $\kappa$ preserves rank-realizability, reducing the problem to the positive-threshold case.

\begin{theorem}\label{thm:predicate_shifting}
Consider $\phi_\weightset$ in PNF with predicates $\mu := \signal^j \sim c_\predicate$, relations $\sim \in \{\ge,\le\}$, and $c_\predicate \in \mathbb{R}$. Let $\kappa$ be such that $c^{(\kappa)}_\predicate := c_\predicate + \kappa > 0$ for all predicates. Define $\phi^{(\kappa)}_\weightset$ by replacing each predicate threshold with $c^{(\kappa)}_\predicate$. If $\phi_\weightset$ is rank-realizable with respect to $\set{X} = \{\signal_1,\ldots,\signal_d\}$, then $\phi^{(\kappa)}_\weightset$ is rank-realizable with respect to $\set{X}^{(\kappa)} = \{\signal_i + \kappa\}_{i=1}^d$.
\end{theorem}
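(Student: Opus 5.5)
The plan is to show that the shift leaves every predicate value unchanged, so that the robustness vectors of the two problems coincide for every weight valuation; the ranking sets are then equal. We read $\signal_i + \kappa$ as adding $\kappa$ to every coordinate of $\signal_i$ at every time instant. This is the natural reading, since each predicate acts on a single dimension and all thresholds are shifted by the same $\kappa$.

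\textbf{Step 1 (predicate level).} Fix a predicate $\predicate := \signal^j \sim c_\predicate$. Its function is $h_\predicate(\signal(t)) = \signal^j(t) - c_\predicate$ for $\sim$ equal to $\ge$, and $c_\predicate - \signal^j(t)$ for $\le$. The shifted predicate $\predicate^{(\kappa)}$ evaluated on the shifted signal gives
\[
h_{\predicate^{(\kappa)}}\big(\signal(t)+\kappa\big) = (\signal^j(t)+\kappa) - (c_\predicate+\kappa) = h_\predicate(\signal(t)),
\]
and the $\le$ case is symmetric. This identity holds for every predicate, every time $t$ and every signal $\signal_i$. Negated predicates in PNF are just sign flips of the same quantity, so they are covered as well.

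\textbf{Step 2 (structural induction).} We argue by induction on the formula tree of $\phi$ in PNF, using the recursion \eqref{eq:quantsemantics}--\eqref{eq:quantsemanticsdr}. The claim is that for every fixed valuation $\wval \in \reals^p_{>0}$, every subformula $\varphi$ and every $t$,
\[
\wrob\big(\signal_i+\kappa, \varphi^{(\kappa)}_{\wval}, t\big) = \wrob\big(\signal_i, \varphi_{\wval}, t\big).
\]
The base cases are $\top$ and the predicates, the latter given by Step 1. In the inductive step, $\lnot$, $\land^{\weight}$, $\lor^{\weight}$, $\G$, $\F$ and $\U$ are the same functions of the children's values with the same weights, because $\phi$ and $\phi^{(\kappa)}$ share their syntax tree and parameter set $\weightset$. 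Equality is therefore preserved at every node. Equivalently, one can invoke Defn.~\ref{def:base-weights}: the base weights are identical in both formulas and multiply identical $h$-values. Consequently $\boldsymbol{\rho}_{\phi^{(\kappa)}}(\wval)$ over $\set{X}^{(\kappa)}$ equals $\boldsymbol{\rho}_\phi(\wval)$ over $\set{X}$ for all $\wval$.

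\textbf{Step 3 (conclusion).} By Definition~\ref{defn:ranking-set}, $\langle \phi^{(\kappa)}\rangle_{\set{X}^{(\kappa)}} = \langle \phi\rangle_{\set{X}}$. Rank-realizability of $\phi$ on $\set{X}$ thus transfers directly to $\phi^{(\kappa)}$ on $\set{X}^{(\kappa)}$. The equality also shows that the converse direction holds, and that the MILP certificate of Thm.~\ref{thm:milp-correctness} carries over unchanged.

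No step is a real obstacle; the only delicate point is bookkeeping. We must confirm that PNF negations and the derived operators introduce no threshold-dependent terms, and that the shift is applied to all dimensions consistently. When several predicates share a dimension, the single common shift $\kappa$ of that dimension is what makes Step 1 hold for all of them at once. The choice of $\kappa$ only serves to make the thresholds $c^{(\kappa)}_\predicate$ positive so that the surrogate $\log$-encoding is defined; it plays no role in the argument.
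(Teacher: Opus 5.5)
Your proposal is correct and follows the same route as the paper. The paper's proof also observes that $s^{(\kappa),j} - c^{(\kappa)}_\mu = s^j - c_\mu$, so the weighted robustness is unchanged for every valuation and the ranking sets coincide. Your structural induction, the $\le$ case, and the handling of PNF negations only spell out details the paper leaves implicit.
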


\begin{proof}
    Define $\signal^{(\kappa),j}_i = \signal^j_i + \kappa$. Then $\signal^{(\kappa),j} - c^{(\kappa)}_\predicate = \signal^j - c_\predicate$, therefore $\wrob(\signal,\phi_{\wval},t) = \wrob(\signal^{(\kappa)},\phi^{(\kappa)}_{\wval},t)$ for all weight valuations. Rank-realizability is therefore preserved under the $\kappa$-shift.
\end{proof}

To summarize rank-capacity approaches, when predicates are defined for different dimensions, we introduce variables for $\log(h_{\predicate}(s(t)))$ and solve the following problem to find whether there exists $d$ signals that the formula can rank in all combinations: 
\begin{corollary}\label{cor:rc_lb}
Consider $\phi_{\weightset}$. If the optimizer of the following problem is $\varepsilon^*>0$, then $\textrm{RC}(\phi) \geq d$:
\begin{equation}
    \label{eq:decision-corollary}
    \begin{array}{cc}
    \max & \varepsilon \\
    \suchthat &\eqref{eq:milp-flag1}-\eqref{eq:local-independence}, \eqref{eq:log-pred-upper-bound}, \eqref{eq:log-pred-lower-bound}\\
    & \signal_1(t), \ldots \signal_d(t) \in \sdomain_{>0}, \\ & \xi^i_{\phi, t}, \eta^i_{\phi, t}, \zeta^i_{\phi, t} \in \mathbb{B}, w_{\predicate, t} \in \reals^p, \\ & \epsilon^{i}_{\predicate, t} \ge 0, \varepsilon \geq 0 \end{array}
\end{equation}
\end{corollary}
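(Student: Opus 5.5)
The plan is to reduce the corollary to Thm.~\ref{thm:capacity-milp} and Thm.~\ref{thm:milp-equiv}, using Thm.~\ref{thm:predicate-positivity} and Thm.~\ref{thm:predicate_shifting} only to justify that nothing is lost by working with logarithms. Suppose \eqref{eq:decision-corollary} has optimizer $\varepsilon^*>0$. From the optimal solution I read off three things: the log-predicate values $u^i_{\predicate,t}$, which stand for $\log h_\predicate(\signal_i(t))$; the log-weights $v_{\varphi,t}$; and the indicators $\xi,\eta,\zeta$. I then exponentiate, setting $w_{\varphi,t}=e^{v_{\varphi,t}}>0$ and $h^i_{\predicate,t}=e^{u^i_{\predicate,t}}>0$. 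Because every predicate acts on a distinct signal dimension, each predicate $\predicate: \signal^j\sim c_\predicate$ is affine in a coordinate that no other predicate uses. So for each $i,t$ I can solve for $\signal_i^j(t)$: take $\signal_i^j(t)=c_\predicate+h^i_{\predicate,t}$ when $\sim$ is $\ge$, and $\signal_i^j(t)=c_\predicate-h^i_{\predicate,t}$ when $\sim$ is $\le$. Coordinates used by no predicate are set arbitrarily, for instance to $1$. This gives $d$ concrete signals with $h_\predicate(\signal_i(t))=h^i_{\predicate,t}$ exactly. If the domain requires $\sdomain_{>0}$ and some threshold is negative, I first apply the $\kappa$-shift of Thm.~\ref{thm:predicate_shifting}, which leaves every $h$-value unchanged.

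Next I check that the resulting pair $(\signal_1,\ldots,\signal_d;\wval)$ satisfies the original constraints of \eqref{eq:decision-milp} with a positive margin. Monotonicity of $\log$ is the key fact. Constraints \eqref{eq:log-pred-upper-bound}--\eqref{eq:log-pred-lower-bound} say that $\log(\widetilde w_{\predicate,t}h^i_{\predicate,t})$ is within $\le 0$ or $\ge 0$ of $\log 1$, gated by the same indicators. Exponentiating turns these into $\widetilde w_{\predicate,t}h^i_{\predicate,t}\le 1$ or $\ge 1$, and the base weights are products of the $w$'s, so Cond.~4 holds automatically. The log-domain margin $\epsilon^i_{\predicate,t}\ge\varepsilon^*$ at non-critical pairs becomes a multiplicative separation of $e^{\pm\varepsilon^*}$ from $1$, that is, an additive slack of at least $\min(1-e^{-\varepsilon^*},e^{\varepsilon^*}-1)>0$. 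This is a strictly positive margin in the original domain. The operator constraints \eqref{eq:copying-conj}--\eqref{eq:local-independence} involve only binaries, so they carry over unchanged. Since $\min$ and $\max$ commute with the monotone map $\exp$, the structural induction in the proof of Thm.~\ref{thm:milp-correctness} applies verbatim. It gives $\boldsymbol{\rho}_\phi(\widetilde{\wval})=\mathbf{1}_d$, with a unique critical pair for each signal and distinct critical pairs across signals.

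With the signals fixed, the conditions of Pb.~\ref{prob:rank_realizability} now hold. Thm.~\ref{thm:rank-real-equiv}, or equivalently Thm.~\ref{thm:milp-equiv}, then shows that $\phi$ is rank-realizable over $\set{X}=\{\signal_1,\ldots,\signal_d\}$. By Definition~\ref{def:rank-capacity}, $\textrm{RC}(\phi)\ge d$, which is also the conclusion of Thm.~\ref{thm:capacity-milp}. Thm.~\ref{thm:predicate-positivity} is not needed for this direction of the argument. It explains only why restricting to positive predicate values does not make the bound vacuous.

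The main obstacle is the step that recovers signals from the log-predicate variables. This step depends on the distinct-dimension assumption, and on strict positivity, which is what makes $\exp$ a bijection onto $\reals_{>0}$. I would state explicitly that the corollary is meant under the distinct-dimension hypothesis of Thm.~\ref{thm:predicate-positivity}. Under shared dimensions, the surrogate values would have to be certified by re-solving \eqref{eq:decision-milp}, as the paper notes. A secondary technical point is that the big-$M$ bounds must remain valid after exponentiation. They do, because the gated inequalities are relaxed only when an indicator is $0$, and in that case the original constraint is vacuous once $M$ is chosen large relative to the realized values.
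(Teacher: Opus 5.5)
Your proposal is correct and follows the paper's route. The paper's proof is only the remark that the corollary follows from Thm.~\ref{thm:milp-correctness}; your steps supply what that remark leaves implicit: exponentiating the log-domain solution, reconstructing signals from the $\log h_\predicate$ variables under the distinct-dimension hypothesis, translating the margin $\varepsilon^*$ into $1-e^{-\varepsilon^*}>0$, and invoking Thm.~\ref{thm:rank-real-equiv}.

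One side remark should be dropped. Thm.~\ref{thm:predicate_shifting} replaces $\phi$ by $\phi^{(\kappa)}$, so shifting cannot make the reconstructed signals positive while still certifying $\textrm{RC}(\phi)\ge d$. It also would not cover $\le$-predicates, where $c_\predicate-h$ can be negative even when $c_\predicate>0$. The remark is unnecessary anyway, because the rank-capacity conclusion does not require the recovered signals to be positive.
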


Proof directly follows Thm.~\ref{thm:milp-correctness}. If the predicates share the same signal dimension, we apply the threshold shift of Thm.~\ref{thm:predicate_shifting}, use the $\log(s^j)-\log(c_{\predicate})$ and solve~\eqref{eq:decision-milp-capacity} for $d$ signals, which is still an MILP. If this problem is feasible with $\varepsilon^*>0$, then there exists a candidate set of signals. We need to certify candidates via~\eqref{eq:decision-milp}. If the rank-realizability problem is feasible with, again, a strictly positive value, then we have $\mathrm{RC}(\phi)\ge d$.

To find a tight lower bound, initialize at $d = \tilde{p}$ where $\tilde{p}$ is the number of predicate-time pairs, the maximum possible rank-capacity under our sufficient conditions, and decrease $d$ until~\eqref{eq:decision-corollary} is feasible with $\varepsilon^*>0$. Binary search over $d \in \{2,\ldots,\tilde{p}\}$ is a practical alternative.

\section{Experiments}
\subsection{Rank-realizability of Robotic Tasks}
In this experiment, we focus on a robot navigation task. Our goal is to determine whether a task specification can rank task-satisfying trajectories across all combinations. The robot starts at $(x, y)=(1,1)$, must visit region $A = [7,9]\times[1,3]$ or $B = [1,3]\times[7,9]$ within {the first $10$-second period}, then reach $C=[7,9]\times[7,9]$ within {the next $10$-second period} while staying in environment bounds $E = [0,10]\times[0,10]$ and avoiding unsafe region $U=[3,6]\times[3,6]$. This task is written as $\phi = \F_{[0,10]}((x,y) \in A \lor (x,y) \in B) \land \F_{[10,20]}\G((x,y) \in C) \land \G_{[0,20]}\lnot((x,y) \in U) \land \G_{[0,20]}((x,y) \in E)$. 

Fig.~\ref{fig:robots} shows twelve trajectories, each generated with random weight valuations using PyTelo~\cite{cardona2023flexible}. Let $\set{X}_r$ denote the set of trajectories. Solving the $\log$-transform of ~\eqref{eq:decision-milp} for $\phi$ and $\mathcal{X}_r$, yields an infeasible MILP, meaning $\phi$ does not satisfy the sufficient condition for rank-realizability over $\mathcal{X}_r$. Intuitively, this is expected because multiple trajectories differ very little in their positions over time. It is structurally difficult to place similar trajectories in all possible orderings.

\begin{figure}[hbt!]
    \centering
    \includegraphics[width=0.6\linewidth]{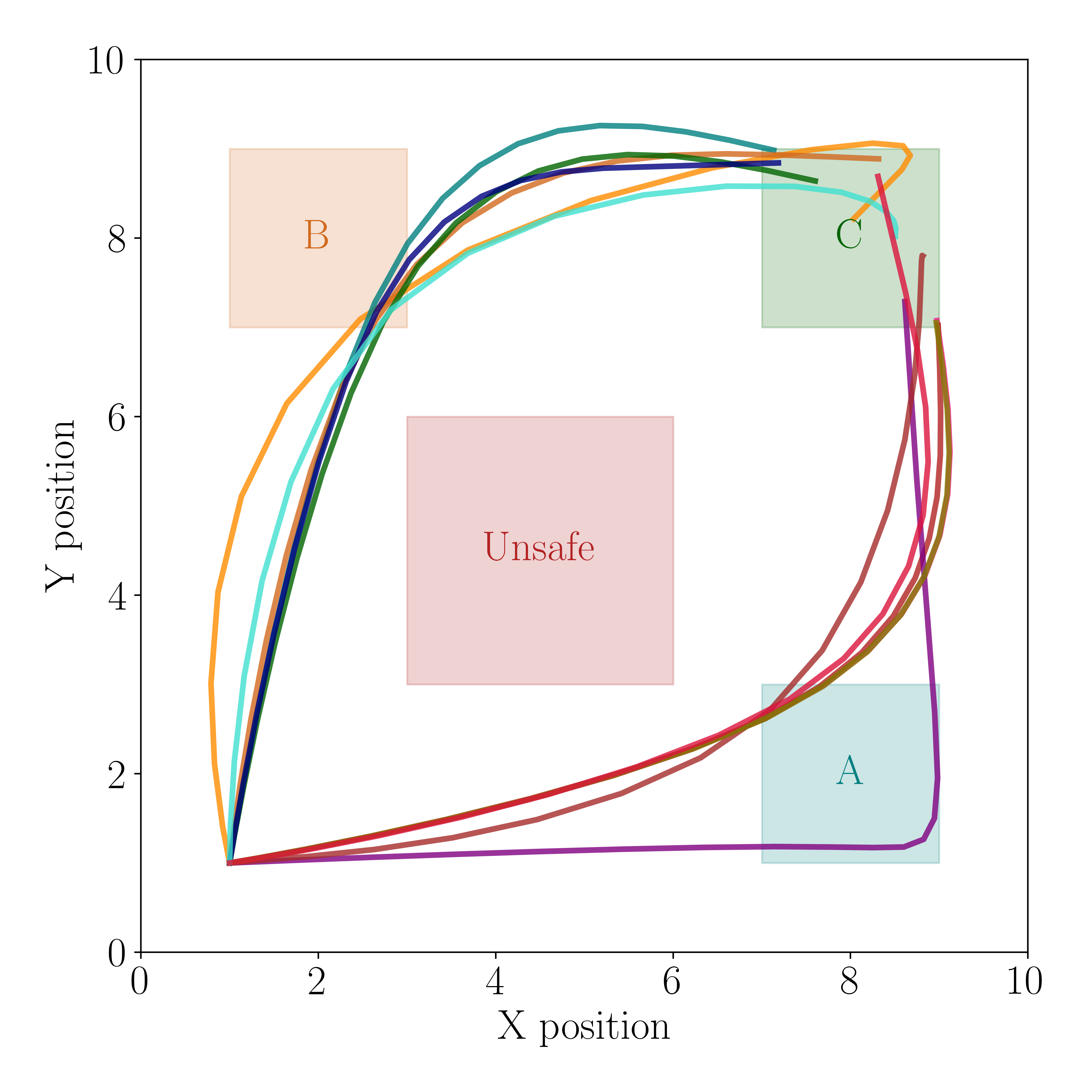}
    \caption{Trajectories satisfying the robot navigation task. Each trajectory starts at $(1,1)$ and reaches region $C$ in $T=20$ seconds, while visiting either region $A$ or region $B$. }
    \label{fig:robots}
\end{figure}

Solving the rank-capacity problem yields $\textrm{RC}(\phi) \ge 45$, which is substantially larger than  $|\set{X}_r|=12$. In fact, forty-five signals is the largest number of signals we experimented with and were able to return a response within a time limit of one hour, so $\textrm{RC}(\phi) \ge 45$ might be a loose lower bound. It is worth emphasizing that rank-capacity does not restrict signals from following system dynamics. So, even though there are at least $45$ trajectories in the entire domain over which $\phi$ is rank-realizable, the number of rank-realizable trajectories from a single system may be fewer than $45$.

\subsection{Rank-capacity Analysis of Boolean-Equivalent Formulas}

Motivated by Example~\ref{example:1}, we analyze how the rank-capacity changes among equivalent formulas in Boolean semantics. 
We consider $\phi= \G_{[0,T]}((\signal^1 \ge 0) \lor (\signal^2 \ge 0))$, and test whether it is possible to increase the rank-capacity of a formula by trivially appending it either by conjunction or disjunction. We construct $\phi_{(i)} = \phi \lor \phi$, $\phi_{(ii)} = \phi \land \phi$, $\phi_{(iii)} = \phi_{(i)} \land \phi_{(i)} $, and $\phi_{(iv)} = \phi \lor \phi \lor \phi \lor \phi$. For each formula, we find the tightest lower bound on the rank-capacity. We put a one-hour time limit on each MILP.

\begin{table}[hbt!]
\centering
\caption{RC of formulas that are Boolean-equivalent to $\phi$}
\addtolength{\tabcolsep}{5pt} 
    \begin{tabular}{ccccccc}
    \toprule
     $T$ & $\phi$ & $\phi_{(i)}$ & $\phi_{(ii)}$ &  $\phi_{(iii)}$ & $\phi_{(iv)}$  \\\midrule
    $1$ & $2$ & $4$ & $4$ & $8$ & $8$ & \\
    $2$ & $4$ & $8$ & $8$ & $16$ & $16$ &  \\
    $3$ & $6$ & $12$ & $12$ & $24$ & $24$  & \\
    \bottomrule
    \end{tabular}
    \label{tab:results}
\end{table}

Tab.~\ref{tab:results} shows that increasing the horizon of a formula will increase the rank-capacity, and that appending a formula with itself can increase the rank-capacity. However, we have a counterexample to this claim. Take $\varphi = (\signal^1 \ge 0) \lor (\signal^2 \ge 0) $, extension $\varphi_{(i)} = \varphi \lor \varphi$ will not increase the rank-capacity. Tab.~\ref{tab:boolean-results} shows results for appended versions of $\varphi$. 

\begin{table}[hbt!]
\centering
\caption{RC of formulas that are Boolean-equivalent to $\varphi$}
\addtolength{\tabcolsep}{-1.8pt} 
    \begin{tabular}{ccccccc}
    \toprule
       & $\varphi_{(i)}=$ & $\varphi_{(ii)}=$ & $\varphi_{(iii)}=$ & $\varphi_{(iv)}=$ & $\varphi_{(v)}=$\\ $\varphi$ & $\varphi \lor \varphi$ & $\varphi \land \varphi$ &  $\varphi_{(i)} \land \varphi_{(i)}$ & $\varphi_{(ii)} \land \varphi_{(ii)}$ & $\varphi_{(iii)} \lor \varphi_{(iii)}$ \\\midrule
     $2$ & $2$ & $4$ & $4$ & $8$ & $6$\\
    \bottomrule
    \end{tabular}
    \label{tab:boolean-results}
\end{table}

Tab.~\ref{tab:boolean-results} reveals that simply appending the formula with the disjunction or the conjunction does not necessarily increase its rank-capacity. Furthermore, alternating the outermost operator when appending does not guarantee the same gain compared to just repeating it (cf. $\varphi_{(ii)}$, $\varphi_{(iv)}$, and $\varphi_{(v)}$). 

Together, these results show that certain Boolean manipulations can increase the rank-capacity without changing the qualitative semantics of the formula, but when and how they do so remains unclear, underscoring the need for structural characterizations of rank-capacity beyond Boolean equivalence alone. 

\section{Conclusion, Limitations, and Future Work}
This work formalizes WSTL as a family of scoring functions over signals and addresses two foundational questions:  under what conditions a fixed WSTL template with adjustable weights can realize all permutations of a finite set of signals (rank‑realizability), and 
how this capability scales with the set size (rank‑capacity). The analysis is centered around base weights, which reveals structural properties, such as positive homogeneity and the influence of $\min/\max$ aggregations, clarifying how weights determine the resulting orderings. Building on this, we present an MILP to certify rank-realizability by finding a weight valuation that sets all \wstlrob values to one, assigns each signal a unique critical predicate-time pair, and separates all non-critical pairs by a common margin. The encoding captures the $\min/\max$ semantics of operators using indicator variables and the big-M method. A $\log$‑domain transformation maintains MILP tractability within the original weight space. For rank-capacity, we derive constructive lower bounds by treating signals as decision variables, clarifying that formula structure has a strong influence on capacity.


{Limitations include that the certificate is sufficient but not necessary. When the MILP is infeasible, one cannot distinguish between $\phi$ failing to be rank-realizable on $\set{X}$ and $\phi$ being rank-realizable on $\set{X}$ but failing the unique-distinct-critical-path condition. A complete but intractable alternative is to enumerate $|\set{X}|!$ rankings and solve the weight synthesis problem of \cite{karagulle2025sopl} for each, and construct $\langle \phi \rangle_{\set{X}}$. The proposed MILP trades completeness for tractability. Deriving similar, efficiently computable necessary conditions remains an open problem. The analysis assumes discrete-time signals; extension to continuous-time signals is an open direction. Finally, relaxing the strict ranking assumption to accommodate orderings with ties is a future research direction.}


\balance

\bibliographystyle{IEEEtran}
\bibliography{references}

\end{document}

%% file: macros.tex
\usepackage{cite}
\usepackage[pdftex]{graphicx}
\usepackage{amsmath}
\usepackage{amssymb}
\usepackage{amsfonts}
\usepackage{mathrsfs}
\usepackage{algpseudocode}
\usepackage{algorithm}
\usepackage{array}
\usepackage{xcolor}
\usepackage{tikz}
\usepackage[bb=dsserif]{mathalpha}
\usepackage{booktabs}
\usepackage{xspace}
\usepackage{balance}

\definecolor{tappanred}{HTML}{9A3324}
\definecolor{rossorange}{HTML}{D86018}
\definecolor{teal}{HTML}{008080}
\definecolor{crimson}{HTML}{DC143C}
\definecolor{darkorange}{HTML}{FF8C00}
\definecolor{green}{HTML}{A5A508}

\newtheorem{example}{Example}
\newtheorem{proposition}{Proposition}
\newtheorem{definition}{Definition}
\newtheorem{theorem}{Theorem}
\newtheorem{problem}{Problem}

\newtheorem{corollary}{Corollary}

\newcommand{\set}[1]{\mathcal{#1}}

\newcommand{\val}[1]{\boldsymbol{#1}}

\newcommand{\signal}{s}

\newcommand{\reals}{\mathbb{R}}
\newcommand{\integers}{\mathbb{Z}}

\newcommand{\sdomain}{\mathcal{S}}
\newcommand{\timedomain}{\mathbb{T}}

\newcommand{\mat}[1]{\begin{bmatrix} #1 \end{bmatrix}}
\newcommand{\suchthat}{\text{ s.t. }}

\newcommand{\range}[2]{{(#1{:}#2)}}

\newcommand{\wstlrob}{weighted robustness\xspace}

\newcommand{\G}{\square}
\newcommand{\F}{\Diamond}
\newcommand{\U}{\text{\bf{U}}}
\newcommand{\predicate}{\mu}

\newcommand{\lb}{a}
\newcommand{\ub}{b} 

\newcommand{\rob}{\rho}

\newcommand{\subf}{\sqsubseteq}
\newcommand{\fpath}{\mathfrak{p}}

\newcommand{\weight}{w}
\newcommand{\weightset}{\set{W}}
\newcommand{\wval}{\val{w}}

\newcommand{\wrob}{r}

\newcommand{\perm}{\sigma}
\newcommand{\permspace}{\Sigma}
\newcommand{\robspace}{P}

\algrenewcommand\algorithmicindent{1em}%
\let\oldReturn\Return
\renewcommand{\Return}{\State\oldReturn}

\usetikzlibrary{arrows.meta, fit, backgrounds}

\definecolor{defcol}{HTML}{D6EAF8}
\definecolor{thmcol}{HTML}{FDEBD0}
\definecolor{corcol}{HTML}{F9EBEA}
\definecolor{probcol}{HTML}{EDE7F6}
\definecolor{eqcol}{HTML}{FEF9E7}
\definecolor{toolcol}{HTML}{EAEDED}
\definecolor{seccol}{HTML}{FAFAFA}
\definecolor{defborder}{HTML}{2E86C1}
\definecolor{thmborder}{HTML}{CA6F1E}
\definecolor{corborder}{HTML}{C0392B}
\definecolor{probborder}{HTML}{6C3483}
\definecolor{eqborder}{HTML}{B7950B}
\definecolor{toolborder}{HTML}{7F8C8D}
\definecolor{corecol}{HTML}{E9F7EF}   
\definecolor{coreborder}{HTML}{27AE60}

%% file: flow_chart_iterations.tex
\begin{figure*}
\centering
\resizebox{\linewidth}{!}{%
\begin{tikzpicture}[
    core/.style ={draw=coreborder, fill=corecol, rounded corners=2pt, align=center, font=\small\bfseries, inner sep=1mm, line width=1.2pt, minimum height=0.6cm},
    def/.style  ={draw=defborder,  fill=defcol,  rounded corners=2pt, minimum width=1.8cm, align=center, font=\scriptsize, inner sep=1mm, line width=0.8pt, minimum height=0.8cm},
    thm/.style  ={draw=thmborder,  fill=thmcol,  rounded corners=2pt, align=center, font=\scriptsize, inner sep=1mm, line width=1.2pt, minimum height=0.8cm},
    cor/.style  ={draw=corborder,  fill=corcol,  rounded corners=2pt, align=center, font=\scriptsize, inner sep=1mm, line width=0.8pt, minimum height=0.8cm},
    eq/.style   ={draw=eqborder,   fill=eqcol,   rounded corners=2pt, align=center, font=\scriptsize, inner sep=1mm, line width=0.8pt, minimum height=0.8cm},
    tool/.style ={draw=toolborder, fill=toolcol, dashed, rounded corners=2pt, align=center, font=\scriptsize, inner sep=1mm, line width=0.8pt, minimum height=0.8cm},
    arr/.style  ={-{Stealth[length=5pt]}, line width=1.2pt, gray!55!black},
    seclabel/.style={font=\small\bfseries},
]

\def\xLeft{0.0}       
\def\xRCLeft{11.1}    
\def\xThmSix{13.3}
\def\xThmSeven{15.6}
\def\xRight{17.7}     

\def\yCore{0.0}       
\def\ySupport{-1.0}   
\def\yMain{-2.75}     
\def\yBottom{-4.2}   

\def\yBoxTop{-0.5}    
\def\yBoxBot{-4.4}    

\def\xCut{2.6}        
\def\yCutTop{-3.3}    

\node[core, minimum width=10.9cm, anchor=west] (Def1) at (\xLeft, \yCore)
    {\textbf{Def.~\ref{defn:ranking-set}:} Rank-Realizability};
\node[core, minimum width=6.5cm, anchor=west]  (Def2) at (\xRCLeft, \yCore)
    {\textbf{Def.~\ref{def:rank-capacity}:} Rank-Capacity};

\node[def, anchor=west]  (Def3) at (2.0, \ySupport)
    {\textbf{Def.~\ref{def:base-weights}:}\\ Base Weights};
\node[def, anchor=west]  (Def4) at (4.5, \ySupport)
    {\textbf{Def.~\ref{def:critical-path}:}\\ Critical Path};
\node[thm, minimum width=4cm, anchor=west]  (Thm1) at (6.75, \ySupport)
    {\textbf{Thm.~\ref{thm:indep-sufficient}:}\\ Sufficient Cond.\ for $|\langle\phi\rangle_{\set{X}}|\!=\!|\set{X}|!$};
\node[thm, minimum width=2cm, anchor=west]  (Thm6) at (\xThmSix, \ySupport)
    {\textbf{Thm.~\ref{thm:predicate-positivity}:}\\ Predicate Positivity};
\node[thm, minimum width=2cm, anchor=west] (Thm7) at (\xThmSeven, \ySupport)
    {\textbf{Thm.~\ref{thm:predicate_shifting}:}\\ Threshold Shift};

\node[tool, minimum width=2.5cm, anchor=north east] (StrPr) at (\xCut-0.1, \yCutTop-0.1)
    {$\log$-transform \&\\ Structural Pruning~\cite{karagulle2025sopl}};

\node[thm, minimum width=2cm, anchor=west]  (Thm2) at (\xLeft, \yMain)
    {\textbf{Thm.~\ref{thm:rank-real-equiv}:}\\ Pb.~\ref{prob:rank_realizability} $\implies$ Pb.~\ref{prob:ranking-set}};
\node[thm, minimum width=3.5cm, anchor=west]  (Thm3) at (3.65, \yMain)
    {\textbf{Thm.~\ref{thm:milp-correctness}:}\\ Pb.~\ref{prob:rank_realizability} $\iff$ Eq.~\eqref{eq:decision-milp}};
\node[thm, minimum width=3.5cm, anchor=west] (Thm5) at (\xRCLeft+0.3, \yMain)
    {\textbf{Thm.~\ref{thm:capacity-milp}:}\\ 
    Eq.~\eqref{eq:decision-milp-capacity} $\implies$ $\mathrm{RC}(\phi)\ge d$
    };

\node[thm, minimum width=2.5cm,  double, double distance=1.5pt, anchor=north west] (Thm4) at (8.05, \yCutTop-0.1)
    {\textbf{Thm.~\ref{thm:milp-equiv}:}\\ Eq.~\eqref{eq:decision-milp} $\iff$ [MILP]};
\node[cor, minimum width=2.5cm,  double, double distance=1.5pt, anchor=north west] (Cor2) at (14.25, \yCutTop-0.1)
    {\textbf{Cor.~\ref{cor:rc_lb}:}\\ 
    Eq.~\eqref{eq:decision-milp-capacity} $\implies$ [MILP]
    };

\node (figright) at (\xRight, 0) {};

\draw[arr] ([xshift=-4.45cm]Def1.south) -- (Thm2.north);
\draw[arr] (Def4.south)  -- (Thm3.north);
\draw[arr] (Thm2.east)   -- (Thm3.west);
\draw[arr] (Thm3.south)  |- (Thm4.west);
\draw[arr] (StrPr.east)  -- (Thm4.west);
\draw[arr] ([xshift=-1.6cm]Def2.south) -- ([xshift=-0.4cm]Thm5.north);
\draw[arr] ([xshift=-0.4cm]Thm5.south)  |- (Cor2.west);

\draw[arr] (Def3.south) -- ++(0,-0.3) -| (Thm3.north);


\draw[arr] (Thm1.south) -- ++(0,-0.3) -| (Thm3.north);

\draw[arr] (Thm6.south) -- ++(0,-0.3) -| (Cor2.north);

\draw[arr] (Thm7.south) -- ++(0,-0.3) -| (Cor2.north);

\draw[arr, -{Stealth[length=8pt]}, line width = 2pt, purple] (Thm3.east) -- (Thm5.west);
\draw[arr, -{Stealth[length=8pt]}, line width = 2pt, purple] (Thm4.east) -- (Cor2.west);
\begin{scope}[on background layer]
  \draw[draw=probborder!60, fill=probcol!20, rounded corners=4pt]
    (-0.1,    \yBoxTop) --
    (10.9,    \yBoxTop) --
    (10.9,    \yBoxBot) --
    (\xCut,   \yBoxBot)
    node[midway, below, font=\small\bfseries, probborder] {Rank-Realizability}
    -- (\xCut,   \yCutTop) --
    (-0.1,    \yCutTop) --
    cycle;
  \draw[draw=probborder!60, fill=probcol!20, rounded corners=4pt]
    (\xRCLeft, \yBoxTop) --
    (\xRight,  \yBoxTop) --
    (\xRight,  \yBoxBot) --
    (\xRCLeft, \yBoxBot)
    node[midway, below, font=\small\bfseries, probborder] {Rank-Capacity}
    -- cycle;
  \draw[draw=coreborder, fill=corecol!50, rounded corners=4pt]  
    (-0.1, 0.4)--
    node[midway, above, font=\small\bfseries, coreborder] {Core Problems}
    (\xRight,  0.4) --
    (\xRight,  -0.4) --
    (-0.1, -0.4)
    -- cycle;
\end{scope}

\end{tikzpicture}%
}
\caption{Summary and dependencies of theoretical results. The green band marks the two core problems, each anchoring its respective section; blue nodes represent intermediate definitions, orange nodes represent theorems, and red nodes represent corollaries. Arrows indicate logical dependencies. The crimson arrow highlights the key cross-section dependency, and double borders mark the final results of each section.}
\label{fig:summary-chart}
\end{figure*}